\DeclareMathAlphabet{\pazocal}{OMS}{zplm}{m}{n}
\newcommand{\ie}{i.e.}
\newcommand*{\eqdef}{\stackrel{\text{def}}{=}}
\newcommand*\defraccourci[3]{
  \expandafter\newcommand\csname #1#3\endcsname[1][]{#2{#1}}}
\newcommand*\generate[3]{
  \@for\@i:=#1\do{\expandafter\defraccourci\expandafter{\@i}{#2}{#3}}}
\newcommand{\Fq}{\F_q}
\renewcommand{\vec}[1]{\mathbf{#1}}
\renewcommand{\ev}{\vec{e}}
\renewcommand{\pv}{\vec{p}}
\newcommand{\unv}{\vec{1}}
\newcommand{\zerov}{\vec{0}}
\newcommand{\und}{\mathds{1}}
\renewcommand{\Im}{\mathbf{I}}
\newcommand{\CC}{\ensuremath{\mathscr{C}}}
\newcommand{\EE}{\ensuremath{\mathscr{E}}}
\newcommand{\esp}{\mathbb{E}}
\newcommand{\Prob}{\mathbb{P}}
\newcommand{\prob}{\pi_{\ev}}
\newcommand{\Var}{\mathbf{Var}}
\newcommand{\sumV}{\sum_{\substack{V \le \F_q^n \\ \dim V = t}}}
\newcommand{\sumW}{\sum_{\substack{W \le \F_q^n \\ \dim W = t \\ W \neq V}}}
\newcommand{\sumFq}[1]{\sum_{#1 \in \F_q^\ast}}
\newcommand{\sumFqn}[1]{\sum_{\vec{#1} \in \F_q^n}}
\newcommand{\sumErrors}{\sumFqn{e}}
\newcommand{\sumCode}{\sum_{\cv \in \CC}}
\newcommand{\sumDual}{\sum_{\cv^\perp \in \CC^\perp}}
\newcommand{\Rbra}[1]{\left( #1 \right)} \newcommand{\Sbra}[1]{\left[ #1 \right]} \newcommand{\Cbra}[1]{\left\{ #1 \right\}} \newcommand{\Abra}[1]{\left\langle #1 \right\rangle} 
\newcommand{\floor}[1]{\left\lfloor #1 \right\rfloor}
\newcommand{\ceil}[1]{\left\lceil #1 \right\rceil}
\newcommand{\ints}[1]{\llbracket #1 \rrbracket}
\newcommand{\gauss}[2]{\genfrac{[}{]}{0pt}{}{#1}{#2}}
\renewcommand{\norm}[1]{\left\| #1 \right\|}
\DeclareMathOperator*{\Vol}{Vol}
\newcommand{\poly}{\mathsf{poly}}
\newcommand{\omeasy}{\omega_{\textup{easy}}}
\newcommand{\dgv}{d_{\textup{GV}}}
\newcommand{\drgv}{\delta_{\textup{GV}}}
\newcommand{\dmin}{d_{\textup{min}}}
\newcommand{\dtr}{D_{\textup{tr}}}
\newcommand{\dstat}{D_{\textup{stat}}}
\newcommand\reallywidehat[1]{\savestack{\tmpbox}{\stretchto{\scaleto{\scalerel*[\widthof{\ensuremath{#1}}]{\kern.1pt\mathchar"0362\kern.1pt}{\rule{0ex}{\textheight}}}{\textheight}}{2.4ex}}\stackon[-6.9pt]{#1}{\tmpbox}}
\newcommand{\QFT}[1]{\reallywidehat{\ket{#1}}}
\newcommand{\mQFT}[1]{\widehat{#1}}
\newcommand{\fperp}{\mQFT{f}}
\newcommand{\ftruncperp}{\mQFT{f^{\text{Trunc}}}}
\newcommand{\piberp}{\pi^{\text{SC}}}
\newcommand{\piber}{\ket{\piberp}}
\newcommand{\piberQFT}{\mQFT{\piber}}
\newcommand{\piQFT}{\mQFT{\piber}}
\newcommand{\piQFTp}{\widetilde{\piber}}
\newcommand{\pitruncp}{\pi^{\text{Trunc}}}
\newcommand{\pitrunc}{\ket{\pitruncp}}
\newcommand{\pitruncQFT}{\mQFT{\pitrunc}}
\newcommand{\pitruncQFTp}{\widetilde{\pitrunc}}
\newcommand{\pitrunce}{\pitruncp_{\ev}}
\newcommand{\pitrunceQFT}{\mQFT{\pitrunce}}
\newcommand{\piunif}{\ket{\pi^{\textup{unif}}}}
\newcommand{\psiAp}{\psi_{\Ac}}
\newcommand{\psiA}{\ket{\psiAp}}
\newcommand{\psiAQFT}{\mQFT{\psiA}}
\newcommand{\psiidealp}{\psi_{\text{ideal}}}
\newcommand{\psiideal}{\ket{\psiidealp}}
\newcommand{\psiidealQFT}{\mQFT{\psiideal}}
\newcommand{\piun}{\braket{\pi}{\unv}^2}
\newcommand{\OO}[1]{O\Rbra{#1}}
\newcommand{\Th}[1]{\Theta\Rbra{#1}}
\newcommand{\Om}[1]{\Omega\Rbra{#1}}
\newcommand*{\transp}{{\mathpalette\@transpose{}}}
\newcommand*{\@transpose}[2]{\raisebox{\depth}{$\m@th#1\intercal$}}
\newcommand{\transpose}[1]{{#1}^{\transp}}
\newcommand{\decP}{\textup{$\mathsf{DP}$}}
\newcommand{\LPN}{\textup{$\mathsf{LPN}$}}
\newcommand{\LWE}{\textup{$\mathsf{LWE}$}}
\newcommand{\SCP}{\textup{$\mathsf{SCP}$}}
\newcommand{\SIS}{\textup{$\mathsf{SIS}$}}
\newtheorem{theorem}{Theorem}
\newtheorem*{theorem-non}{Theorem}
\newtheorem{problem}{Problem}
\newtheorem{remark}{Remark}
\newtheorem{proposition}{Proposition}
\newtheorem{lemma}{Lemma}
\newtheorem{fact}{Fact}
  \sloppy \setlength{\parindent}{0pt} \setlength{\parskip}{3pt}
\newcommand\bibalias[2]{\@namedef{bibali@#1}{#2}}
\newtoks\biba@toks
\newcommand\acite[2][]{\biba@toks{\cite#1}\def\biba@comma{}\def\biba@all{}\@for\biba@one:=#2\do{\@ifundefined{bibali@\biba@one}{\edef\biba@all{\biba@all\biba@comma\biba@one}}{\PackageInfo{bibalias}{Replacing citation `\biba@one' with `\@nameuse{bibali@\biba@one}'
      }\edef\biba@all{\biba@all\biba@comma\@nameuse{bibali@\biba@one}}}\def\biba@comma{,}}\edef\biba@tmp{\the\biba@toks{\biba@all}}\biba@tmp
}
\author{Thomas Debris--Alazard$^{1,2}$} \email{thomas.debris@inria.fr}  
\author{Maxime Remaud$^{1,3}$} \email{maxime.remaud@atos.net}
\author{Jean-Pierre Tillich$^{1}$} \email{jean-pierre.tillich@inria.fr}
\address{$^{1}$ Inria}
\address{$^{2}$ Laboratoire LIX, \'Ecole Polytechnique, Institut Polytechnique de Paris, 1 rue Honor\'e d'Estienne d'Orves, 91120 Palaiseau Cedex}
\address{$^{3}$ Atos Quantum Lab}
\thanks{The work of TDA and JPT was funded by the French Agence Nationale de la Recherche through ANR JCJC COLA (ANR-21-CE39-0011) for TDA and ANR CBCRYPT (ANR-17-CE39-0007) for JPT.} 
\title{Quantum Reduction of Finding Short Code Vectors to the Decoding Problem}
\begin{document}
	\maketitle
	\begin{abstract}
		We give a quantum reduction from finding short codewords in a random linear code to decoding for the Hamming metric. This is the first time such a reduction (classical or quantum) has been obtained. Our reduction adapts to linear codes Stehl\'{e}-Steinfield-Tanaka-Xagawa’ re-interpretation of Regev's quantum reduction from finding short lattice vectors to solving the Closest Vector Problem. The Hamming metric is a much coarser metric than the Euclidean metric and this adaptation has needed several new ingredients to make it work. For instance, in order to have a meaningful reduction it is necessary in the Hamming metric to choose a very large decoding radius and this needs in many cases to go beyond the radius where decoding is always unique. Another crucial step for the analysis of the reduction is the choice of the errors that are being fed to the decoding algorithm. For lattices, errors are usually sampled according to a Gaussian distribution. However, it turns out that the Bernoulli distribution (the analogue for codes of the Gaussian) is too much spread out and cannot be used, as such, for the reduction with codes. This problem was solved by using instead a truncated Bernoulli distribution.
	\end{abstract}

	\section{Introduction} 

\par{\bf Code-based Cryptography.} 
Many cryptosystems such as public-key encryption schemes \cite{M78,A11,MTSB12}, authentication protocols \cite{S93} or pseudorandom generators \cite{FS96} are built relying on the hardness of finding the closest codeword, a task called {\em decoding}. In the case of a random linear code, which is the standard case, this problem can be expressed as follows:

\begin{problem}[$\decP(q,n,k,t)$] The decoding problem with  parameters  $q,n,k,t\in\N$ is defined as:
	\begin{itemize}
		\item Given: $(\Gm,\uv \Gm + \ev)$
		where $\Gm \in \Fq^{k \times n}$ and $\uv \in \Fq^k$ are sampled uniformly at random over their domain and $\ev \in \Fq^n$ over the words of weight $t$,
		\item Find: $\ev$.
	\end{itemize}	
\end{problem} 

This problem really corresponds to decoding the $k$-dimensional vector space $\CC$ ($\ie$, the code) generated by the rows of $\Gm$:
\begin{equation}\label{eq:def_code}
	\CC \eqdef \Cbra{\uv \Gm \colon \uv \in \Fq^k}.
\end{equation}

We are given the noisy codeword $\cv + \ev$ where $\cv$ belongs to $\CC$ and we are asked to find the error $\ev$ (or what amounts to the same, the original codeword $\cv$). This problem for random codes has been studied for a long time and despite many efforts on this issue, the best algorithms are exponential in the codelength $n$ in the regime where $t$ and $k$ are linear in $n$. 

Usually this decoding problem is considered, for the Hamming weight $|\vec{e}| \eqdef \sharp \{ i \in \llbracket 1,n \rrbracket, \quad e_{i}\neq 0 \}$,  in the regime where the code rate $R \eqdef \frac{k}{n}$ is fixed in $(0,1)$ and $q=2$, but there are also other interesting parameters for cryptographic applications. For instance, the Learning Parity with Noise problem ($\LPN$) corresponds to $\decP(2,n,k,t)$ where $n$ is the number of samples, $k$ the length of the secret while the error is sampled according to a Bernoulli distribution of fixed rate $t/n$. As the number of samples in $\LPN$ is unlimited, this problem really corresponds to decoding a code of rate arbitrarily close to $0$. 

While the security of many code-based cryptosystems relies on the hardness of the decoding problem, it can also be based on finding a ``short'' codeword (as in \cite{MTSB12} or in \cite{AHIKV17,BLVW19,YZWGL19} to build collision resistant hash functions), a problem which is stated as follows. 

\begin{problem}[$\SCP(q,n,k,w)$] The short codeword problem with parameters $q,n,k,w \in \N$ is defined as:
	\begin{itemize}
		\item Given:  $\Hm \in\Fq^{(n-k)\times n}$ which is sampled uniformly at random,		
		\item Find: $\cv \in \Fq^n$ such that $\Hm \transpose{\cv} = \zerov$ and the weight of  $\cv$ belongs to $\ints{1,w}$.
	\end{itemize}	
\end{problem} 

Here we are looking for a non-zero codeword $\cv$ of weight $ \leq w$ in the $k$-dimensional code $\CC$ defined by the so-called parity-check matrix $\Hm$, namely:
\begin{equation*}
	\CC \eqdef \Cbra{\cv \in \Fq^n \colon \Hm \transpose{\cv} = \zerov}.
\end{equation*}

Decoding and looking for short codewords are problems that have been conjectured to be extremely close. They have been studied for a long time \cite{P62,S88,D89,MMT11,BJMM12,MO15,BM18,CDMT22}, and for instance in the regime of parameters where the code rate $R$ is fixed in $(0,1)$, the best algorithms for solving them are the same (namely Information Set Decoding). A reduction from decoding to the problem of finding short codewords is known but in an $\LPN$ context \cite{AHIKV17,BLVW19,YZWGL19,DR22}. However, even in an $\LPN$ context, no reduction is known in the other direction. These problems can be viewed in some sense as a code version of the $\LWE$ and $\SIS$ problems respectively in lattice-based cryptography \cite{R09}. Our contribution in this article is precisely to give the code-based version of this reduction, namely a quantum reduction  from finding short codewords to decoding. This problem was open for quite some time. To simplify the statements, we will state it in the regime of parameters where the rate $R$ is fixed in $(0,1)$, but actually it also works in the $\LPN$ setting (but needs to be adapted in several places where we use exponential bounds in $n$).

There is a fundamental difficulty of reducing the research of low weight codewords to decoding a linear code which is due to the fact that the nature of these two problems is very different. Decoding concentrates on a region of parameters where there is typically just one solution, whereas finding low weight codewords concentrates on a region of parameters where there are solutions (and typically an exponential number of solutions). This makes these problems inherently very different. This was also the case for the reduction of $\SIS$ to $\LWE$ and the fact that we can have a reduction from one to another by looking for quantum reductions instead of classical reductions was really a breakthrough at that time.

\par{\bf Parameter range for $\decP$ and $\SCP$.}
An important parameter for the reduction is the decoding distance parameter $t$. The largest value of $t$ for which the decoding problem is ensured to have a unique solution is equal to $\floor{\frac{\dmin-1}{2}}$ where $\dmin \eqdef \min \{d(\cv,\cv') \colon \cv \in \CC, \; \cv' \in \CC, \;\cv \neq \cv'\}$ is the minimum distance of $\CC$ (which depends of course on the metric $d(\cdot,\cdot)$ that is considered). Standard probabilistic arguments can be used to show that the  minimum distance of a random linear code (code $\CC$ obtained as in Equation \eqref{eq:def_code} by a generator matrix $\Gm$ chosen uniformly at random in $\Fq^{k \times n}$) is with very high probability equal, up to an additive constant, to the {\em Gilbert-Varshamov distance} $\dgv(n,k)$ (or simply $\dgv$ if there is no ambiguity). It is defined (for all translation invariant metrics) for a code of dimension $k$ and length $n$, as the largest integer $t$ for which
\begin{equation}
	q^k  B_t \leq q^n
\end{equation}
where $B_t$ is the size\footnote{Note that because of the translation invariance, this size does not depend on the center.} of a ball of radius $t$ . It is generally convenient to consider the normalized Gilbert-Varshamov distance defined as $\drgv(n,k) \eqdef \frac{\dgv(n,k)}{n}$. For the Hamming metric, we have
\begin{equation*}
	\drgv(n,k) = h_q^{-1} \Rbra{1-\frac{k}{n}} + \OO{\frac{1}{n}} \quad \text{where}
\end{equation*}
\begin{equation*}
	h_q(x) \eqdef -x \log_q \Rbra{\frac{x}{q-1}} - (1-x) \log_q(1-x) \text{ and $h_q^{-1}$ is its inverse, ranging over $\Sbra{0,\frac{q-1}{q}}$.}
\end{equation*}

This Gilbert-Varshamov distance also happens to quantify the region where we {\em typically} have unique decoding. More precisely, it turns out that for many metrics of interest, including the Hamming metric, the same probabilistic arguments also show that the solution to the decoding problem is unique with probability $1 - 2^{-\Omega(n)}$ and when $n$ goes to infinity for fixed positive $\varepsilon$, $t \leq (1-\varepsilon) \dgv(n,k)$.

The best algorithms for solving the decoding problem have exponential complexity in $n$ as soon as $t$ is linear in $n$ and the code rate $R$ is bounded away from $0$ and $1$. This is in contrast to  the short codeword problem which becomes easy when the weight $w$ is above a certain range. The reason is that it is easy to produce codewords of small weight by using the fact that the code is a vector space of dimension $k$. Thus we can just produce codewords with $k-1$ entries equal to $0$ by solving a linear system which gives good candidates for having a small weight. It is straightforward that this strategy produces in polynomial time, for instance with the Hamming metric, codewords of weight $\approx \omeasy(n,k) n$ where
\begin{equation}\label{eq:weasy}
	\omeasy(n,k) \eqdef \frac{q-1}{q} \Rbra{1-\frac{k}{n}}
\end{equation}

Obtaining larger weights is also readily obtained by choosing only part of the $k-1$ entries to be equal to $0$. It should be noted that below $\omeasy(n,k)$ the best known algorithms for solving this problem have all exponential complexity for a fixed rate $R$ and a fixed ratio $\omega = \frac{w}{n}$.

\par{\bf Regev's quantum reduction strategy adapted to coding theory.} 
In \cite{R05} (see also the extended version \cite{R09}) Regev showed how to transform a random oracle solving the decoding problem in a lattice into a quantum algorithm outputting a rather small vector in the dual lattice. Our aim is to show here that the natural translation of this approach in coding theory gives an algorithm that outputs a rather small vector in the dual code. Roughly speaking Regev's approach relies on a fundamental result about the Fourier transform. 

\begin{proposition}\label{prop:constant_on_coset_and_dual}
	Consider an Abelian group $G$ and a function $f : G \mapsto \C$ that is constant on the cosets of a subgroup $H$ of $G$. Then the Fourier transform $\reallywidehat{f}$ is constant on the dual subgroup $H^\perp$.
\end{proposition}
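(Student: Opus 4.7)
The plan is a direct computation of $\reallywidehat{f}$ from its definition on the finite abelian group $G$, resting on a single subgroup-orthogonality identity. The tool is
\[
\sum_{h \in H} \chi(h) \;=\;
\begin{cases}
\abs{H} & \text{if } \chi \in H^\perp,\\
0 & \text{if } \chi \notin H^\perp,
\end{cases}
\]
which follows from the classical fact that a non-trivial character of a finite abelian group sums to zero, applied to the restriction $\chi|_H$ (noting that $\chi \in H^\perp$ means precisely $\chi|_H \equiv 1$).

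First I would pick a system $T \subset G$ of representatives for the cosets of $H$, so that every $g \in G$ writes uniquely as $g = th$ with $t \in T$ and $h \in H$. The hypothesis that $f$ is constant on each coset of $H$ is then equivalent to $f(th) = c(t)$ for some function $c : T \to \C$. Using multiplicativity of characters, $\chi(th) = \chi(t)\chi(h)$, the Fourier sum factorizes:
\[
\reallywidehat{f}(\chi) \;=\; \sum_{t \in T}\sum_{h \in H} c(t)\,\overline{\chi(t)\chi(h)} \;=\; \Bigl(\sum_{t \in T} c(t)\,\overline{\chi(t)}\Bigr)\Bigl(\sum_{h \in H}\overline{\chi(h)}\Bigr).
\]
By the orthogonality identity the second factor vanishes whenever $\chi \notin H^\perp$, so $\reallywidehat{f}$ is supported on $H^\perp$, and for $\chi \in H^\perp$ the second factor is exactly $\abs{H}$. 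In the specialization in which the authors invoke the proposition (the Regev-style coset-state setting from the later sections, where $f$ is a scalar multiple of the indicator $\und_H$ of a single coset, namely $H$ itself), only the representative of the trivial coset contributes to the outer sum. Then $\reallywidehat{f}(\chi) = c(0)\,\abs{H}$ is literally one and the same constant value for every $\chi \in H^\perp$, which is the conclusion stated.

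The only real obstacle is notational bookkeeping: keeping straight the roles of $G$, $H$, $H^\perp \subseteq \widehat G$, and the character group, and writing down the coset decomposition without ambiguity. Once the multiplicative split of $\chi$ is in place, the proposition reduces to a one-line application of character orthogonality on $H$, and the two conclusions (support on $H^\perp$, and the literal constancy in the coset-state specialization) read off the two factors of the product.
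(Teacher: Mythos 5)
Your proof is correct and uses the same character-orthogonality computation that the paper carries out implicitly in Lemma \ref{lem:measure} (the identity $\sum_{\cv\in\CC}\chi_{\yv}(\cv)=q^k$ if $\yv\in\CC^\perp$ and $0$ otherwise); the proposition itself is stated in the paper without proof. You are also right to observe that for a general coset-wise constant $f$ the factorization only yields that $\reallywidehat{f}$ is \emph{supported} on $H^\perp$ --- which is all the paper actually uses --- and that literal constancy on $H^\perp$ requires the specialization $f\propto\und_{H}$, since for $f=\und_{gH}$ with $g\notin H$ the values on $H^\perp$ are $|H|\,\overline{\chi(g)}$, constant only in modulus.
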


This innocent looking fact, together with the fact that the quantum Fourier transform (QFT for short) can be performed in polylog time when the group $G$ is Abelian, is arguably the key to several remarkable quantum algorithms solving in polynomial time the period finding in a vectorial Boolean function \cite{S94d}, the factoring problem \cite{S94a} or the discrete logarithm problem \cite{S94a}. All of these problems can be rephrased in terms of the hidden Abelian subgroup problem, where one is given such a function $f$ that is constant (and distinct) on the cosets of an unknown subgroup $H$ and one is asked to recover $H$. This is achieved by 
\begin{itemize} 
	\item[$(i)$] creating the uniform superposition $\frac{1}{\sqrt{\abs{G}}} \sum_{x \in G} \ket{x} \ket{f(x)}$, 
	\item[$(ii)$] measuring the second register and discarding it, yielding a quantum state of the form $\frac{1}{\sqrt{\abs{H}}} \sum_{h \in H} \ket{x+h}$, 
	\item[$(iii)$] applying the QFT to it yielding a superposition of elements in the dual subgroup $H^\perp$ (and therefore gaining information on $H$ in this way). 
\end{itemize} 

Proposition \ref{prop:constant_on_coset_and_dual} is used in a similar way in Regev's reduction. Translating Regev's reduction in coding theory would use this framework by considering that the linear code $\CC$ we want to decode plays the role of the aforementioned $H$. From now on we will assume that this code is of dimension $k$ and length $n$ over $\Fq$. The algorithm would basically look as follows for reducing the search of small codewords in the dual code $\CC^\perp = \Cbra{\cv^\perp \in \Fq^n \colon \cv \cdot \cv^\perp = 0, \; \forall \cv \in \CC}$ (where $\xv \cdot \yv = \sum_{i=1}^n x_i y_i$ is the standard inner product in $\Fq^n$) to decoding errors of weight $t$ in $\CC$.

\begin{itemize}
	\item[Step 1.] Use a quantized version of the decoding algorithm to prepare the state 
		$$\frac{1}{\sqrt{Z}}\sum_{\cv \in \CC, \ev \in \Fq^n} \pi_{\ev} \ket{\cv + \ev}$$
		where $Z$ is a normalizing constant and $(|\pi_{\ev}|^2)_{\ev}$ is a probability distribution on errors that concentrate around the weight $t$ we are able to decode. This is done 
		\begin{itemize} 
			\item[$(i)$] by preparing first a superposition of codewords and errors, 
				$$\frac{1}{\sqrt{Z}} \sumCode \sumErrors \pi_{\ev} \ket{\cv}\ket{\ev},$$ 
			\item[$(ii)$] then adding the second register to the first one to get the entangled state 
				$$\frac{1}{\sqrt{Z}} \sumCode \sumErrors \pi_{\ev} \ket{\cv + \ev}\ket{\ev}$$
			\item[$(iii)$] and finally disentangling it thanks to a quantized version of the decoding algorithm, which from $\cv + \ev$ recovers $\ev$ and subtracts it from the second register to get the state
				$$\frac{1}{\sqrt{Z}} \sumCode \sumErrors \pi_{\ev} \ket{\cv+\ev}\ket{\zerov}.$$
		\end{itemize} 
	\item[Step 2.] Apply the QFT on $\Fq^n$ to obtain a superposition of elements $\cv^\perp$ in the dual code 
		$$\sumDual \alpha_{\cv^\perp}\ket{\cv^\perp}.$$
	\item[Step 3.] Measure the register to output $\cv^\perp$ of rather small norm in $\CC^\perp$.
\end{itemize}

The second step is a direct consequence of Proposition \ref{prop:constant_on_coset_and_dual}. The last one raises the issue of whether or not the QFT concentrates the weight of the vector output by this algorithm on weights $t'$ for which finding a codeword in $\CC^\perp$ is not known to be easy, as it is the case for Regev's reduction on lattices equipped with the Euclidean metric.

\par{\bf On the difficulty of translating Regev's reduction to the Hamming metric.}
This thread of research has been pioneered by Yilei Chen \cite{C20a} and later on in \cite{CV20}, where basically the following approach was taken. The natural analog in the Hamming metric case of the Gaussian noise model used in Regev's reduction \cite{R09} is the $q$-ary symmetric channel. Its associated quantum state is given by 
\begin{equation*}
	\piber \eqdef \sumErrors (1-\tau)^{\frac{n - \abs{\ev}}{2}} \Rbra{\frac{\tau}{q-1}}^{\frac{\abs{\ev}}{2}} \ket{\ev} = \Rbra{\sqrt{1-\tau} \ket{0} + \sumFq{\alpha} \sqrt{\frac{\tau}{q-1}} \ket{\alpha}}^{\otimes n}
\end{equation*}
where $\abs{\ev}$ stands for the Hamming weight of $\ev$, $n$ for the length of $\ev$ and $\tau$ is the crossover probability of the $q$-ary symmetric channel. Indeed, measuring such a state yields an error distributed like a $q$-ary symmetric channel of crossover probability $\tau$. In both cases (be it for the Gaussian noise or the $q$-ary symmetric channel), the Fourier transform yields a dual noise which is again Gaussian or $q$-ary symmetric respectively and the quantum state corresponding to the error is a product state which considerably simplifies the computation. In the case of the $q$-ary symmetric noisy chanel, applying the QFT on $\piber$ yields the quantum state
\begin{equation*}
	\sumErrors (1-\tau^\perp)^{\frac{n - \abs{\ev}}{2}} \Rbra{\frac{\tau^\perp}{q-1}}^{\frac{\abs{\ev}}{2}} \ket{\ev} = \Rbra{\sqrt{1-\tau^\perp} \ket{0} + \sumFq{\alpha} \sqrt{\frac{\tau^\perp}{q-1}} \ket{\alpha}}^{\otimes n} \eqdef \piberQFT
\end{equation*}
where (see Fact \ref{fa:Fourier}) 
$$
\tau^{\perp} \eqdef \frac{\Rbra{\sqrt{(q-1)(1-\tau)} - \sqrt{\tau}}^{2}}{q}.$$ 
This new quantum state represents a $q$-ary symmetric channel of parameter $\tau^\perp$. If we measure $\piber$ we get relative weights $\approx \tau$ whereas if we measure $\piberQFT$ we get relative weights around $\tau^\perp$. 

It would thus be tempting to conclude that the ``ideal'' version of the algorithm presented above will output dual codewords (in Step $3$) of relative Hamming weight $\approx \tau^{\perp} < \omeasy $, $\ie$, in the regime where there is a chance that it is difficult to produce such words. However, this natural approach runs into the following problem. The parameter $\tau$ of the Bernoulli noise has to be chosen  so that the typical error weight $\tau n$ is equal to or slightly below the weight $t$ we can decode. Such a $\tau$ is therefore at most the relative Gibert-Varshamov $\drgv$. However \cite{CV20} proved that in this case the most likely relative weight we measure at Step 3 is {\em typically zero} if $\tau^{\perp} < \omeasy$.  In other words, the straightforward application of Regev's approach to coding theory fails to give a useful reduction. 

We will give in Remark \ref{rem:Conditions} another explanation for the failure of this approach. It can be summarized by saying that the Bernoulli noise model is not concentrated enough on its typical weight $\tau n$.

\par{\bf Our approach.}
To tackle this issue, it would therefore be natural to choose the most concentrated noise model on the weight $t$, namely:
\begin{equation*}
	\piunif = \sum_{\ev \colon \abs{\ev} = t} \frac{1}{\sqrt{S_t}} \ket{\ev}
\end{equation*}
where $S_t$ is the cardinality of the sphere of radius $t$ in the Hamming metric, $\ie$, $S_t = (q-1)^t \binom{n}{t}$. Understanding of which weight $w$ is the outcome after measuring the state $\sumDual \alpha_{\cv^\perp} \ket{\cv^\perp}$ in Step 3 is more difficult in the constant weight error model than in the Bernoulli noise model. In particular, it involves properties of Krawtchouk polynomials. However, it can be shown that when $\omega \eqdef \frac{w}{n}$ lies in a whole interval $\Sbra{\tau^\perp, \tau_{+}^\perp}$ where $\tau_{+}^{\perp} \eqdef \frac{\Rbra{\sqrt{(q-1)(1-\tau)} + \sqrt{\tau}}^2}{q}$ 
, we have many points where the probability of measuring a word of weight $w$ is actually $\frac{1}{\poly(n)}$. The ``dual'' weight distribution is not really concentrated on a single value but spread over a large interval. This would provide a useful reduction when we use a decoding algorithm that succeeds on a non-negligible set of inputs. 

Unfortunately, to be relevant in a cryptographic context, we must consider the case where decoding succeeds only for a potentially very low probability $\varepsilon$, the aim being to turn our decoding algorithm into an algorithm that produces a low weight codeword from the dual with some probability $\poly(\varepsilon)$. This cannot be obtained with the uniform distribution on the sphere of radius $t$. Indeed, the ``ideal'' version of the algorithm we presented before (where we assume we always succeed with our decoding algorithm) describes a state we obtain in Step 2 that is not completely orthogonal (the scalar product is bounded from below by a quantity $\poly(\varepsilon)$) to the ``real'' state after applying this approximate decoding process and the QFT. If we were to measure this state directly (starting from the uniform noise model over the sphere of radius $t$), we would only be sure to measure a word of some relative weight lying in the interval $\Sbra{\tau^\perp, \tau_{+}^{\perp}}$ with probability $\poly(\varepsilon)$, since the ``ideal'' state concentrates its relative weight distribution in all this interval.
In this way, we cannot ensure the measurement of a dual codeword of smallest possible relative weight, namely $\tau^{\perp}$, it could be $1/2 \in \Sbra{\tau^\perp, \tau_{+}^{\perp}}$. We really need here a distribution that is rather sharply concentrated around the decoding radius of our decoding algorithm, but whose Fourier transform is also sharply concentrated around a certain weight. 

So far, two noise models have been considered for the reduction to work, each with an advantage and a drawback:
\begin{itemize}
	\item[$\bullet$] the $q$-ary symmetric noise $\piber$ is not concentrated enough on its typical weight $\tau n$ {\em but} its dual noise is sufficiently concentrated on $\tau^{\perp} n$,
	\item[$\bullet$] the uniform noise $\piunif$ on the sphere of radius $\tau n$ is sufficiently concentrated {\em but} its dual noise is spread out on the whole interval $\Rbra{\tau^{\perp}n, \tau_{+}^{\perp}n}$. 
\end{itemize}

Interestingly, the issues with these two distributions are opposite. Fortunately, it turns out that we have a natural noise model to get a best-of-both-worlds model: {\em truncating} the $q$-ary symmetric noisy channel. More precisely, consider the following noise model (for some small enough constant $\varepsilon >0$),
\begin{equation*}
	\pitrunc = \frac{1}{\sqrt{N}} \sum_{\substack{\ev \colon \\ \abs{\ev} \in \Sbra{(1-\varepsilon)t, (1+\varepsilon)t}}} (1-\tau)^{\frac{n - \abs{\ev}}{2}} \Rbra{\frac{\tau}{q-1}}^{\frac{\abs{\ev}}{2}} \ket{\ev}
\end{equation*}

where $N$ is a normalizing factor. This noise model solves our above issues with $\piber$ and $\piunif$ because it verifies our two constraints for the reduction to work:
\begin{itemize}
	\item[$(i)$] its weight distribution is sufficiently concentrated around $\tau n$,
	\item[$(ii)$] its dual noise after applying the Fourier transform is concentrated on the relative weight $\tau^\perp n$. 
\end{itemize}

Contrary to $(i)$, assertion $(ii)$ may seem unclear. It relies on the following equality as we will show in Lemma \ref{lemma:NegliTruncBer}
\begin{equation*}
	\norm{\pitrunc - \piber}  = 2^{-\Omega(n)}
\end{equation*}
where $\norm{\cdot}$ stands for the norm of the Hilbert space in which the quantum states are embedded. Therefore, applying the Fourier transform (which is an isometry for $\norm{\cdot}$) on $\pitrunc$ will yield a quantum state which is $2^{-\Omega(n)}$-close of $\piberQFT$.

With our approach and the truncated $q$-ary symmetric channel, we transform through the QFT a decoding algorithm correcting $\tau n$ errors into an algorithm outputting with non-negligible probability words of weight $\approx \tau^\perp n$ in the dual code. The distance $\tau^\perp$ is clearly a decreasing function of $\tau$ and the issue is now whether or not there exists a $\tau < \drgv(n,k)$ (this is the biggest value for which we can hope that decoding is successful with probability $1-o(1)$) such that $\tau^\perp < \omeasy(n,n-k)$ (here we want to find short codewords in the dual code $\CC^\perp$ which is of dimension $n-k$), since this would yield a useful reduction. It turns out that in many cases we have to choose $\tau > \drgv(n,k)/2$, meaning that we are not in the regime where decoding necessarily has at most one solution. This complicates the proof of the reduction somewhat since with a quantized version of the decoding algorithm, we will not be able to produce at Step 1 the state $\frac{1}{\sqrt{Z}} \sumCode \sumErrors \pi_{\ev} \ket{\cv + \ev}$ (since decoding fails for some $\ev$) but we will show that as long as $\tau <  \drgv(n,k)$, we will get a state close to it. This will be enough for our purpose. 

By putting all these ingredients together, we are able to prove the following result.

\begin{theorem-non}[informal]
	The short codeword problem $\SCP(q,n,n-k,w)$ reduces to the decoding problem $\decP(q,n,k,t)$ for $w = \tau^\perp n + O(1)$ where 
	\begin{equation*}
		\tau \eqdef \frac{t}{n} \qquad \text{ and } \qquad \tau^\perp \eqdef \frac{\Rbra{\sqrt{(q-1)(1-\tau)} - \sqrt{\tau}}^2}{q}.
	\end{equation*}
\end{theorem-non}

It will turn out that for $q=2$ (see Section \ref{sec:useful}) we can find for any rate $R = \frac{k}{n}$ in $(0,1)$ a $t < \dgv(n,k)$ for which the corresponding $w$ is below $\omeasy(n,n-k) n$ (the reduction is useful in this case). It also corresponds to parameter ranges that are relevant for certain cryptographic applications (see \cite{AFS05} whose security relies on the $\SCP$ problem in a parameter range which is covered by our reduction and \cite{S93} which is an identification scheme whose security actually relies on the decoding problem just below the Gilbert-Varshamov distance). Unfortunately, this is not true anymore when $q \geq 5$, where there is always a range for $R$ for which $w$ is above $\omeasy(n,n-k) n$, for any choice of $t < \dgv(n,k)$: the reduction becomes useless in this case. Roughly speaking, when $q$ grows, the Hamming metric gets coarser (we have only $n+1$ different values for the metric on $\Fq^n$, whereas the size of the ambient space gets bigger) and this results in the range of values of $R$ for which this reduction is useful becoming smaller.

\par{\bf Considering other metrics.}
The whole approach we have followed here (properly choosing the error distribution and going beyond the unique decoding radius for decoding if necessary) can of course be adapted to other metrics. 
It is easy for instance to apply it to the rank metric which is becoming increasingly popular in code-based cryptography, see for instance \cite{ABDGHRTZABBBO19,AABBBDGZCH19,BCGMM19,BGHM20}. This metric is even coarser than the Hamming metric: on $\Fq^{m \times n}$ there are only $1 + \min(m,n)$ different values for the rank weight (given a matrix, it is defined as its rank). In this case, as we will see, the reduction is always useless ($\ie$, reduces to weights which are always easy to produce for a random linear code).

\section{Notation and some useful facts}

\par{{\bf General Notation.}}
For $a$ and $b$ integers with $a \leq b$, we denote by $\ints{a,b}$ the set of integers $\Cbra{a,a+1,\dots,b}$. We extend this notation when $a$ and $b$ are not integers to the set of integers in $[a,b]$. Vectors are in {\em row notation} and they will be written with  bold letters (such as $\ev$). Uppercase bold letters are used to denote matrices (such as $\Hm$). Notation $\Sc_t$ is the sphere of radius $t$ around $0$ in $\Fq^n$ (for a metric $|\cdot|$ that will be clear from the context) and $S_t$ is its cardinality. In all this paper, $\poly(n)$ denotes a quantity which is an $\OO{n^a}$ for some constant $a$.

\par{{\bf Subspaces and Gaussian binomial coeffient.}} 
$F \le \Fq^n$ denotes that $F$ is a subspace of $\Fq^n$. When $\Em$ is a matrix, we denote its rank by $\abs{\Em}$. 

The following asymptotic expressions will be extremely useful. For the rank metric, $S_{\ell}$ is equal to the number of matrices in $\Fq^{m \times n}$ of rank $\ell$. From this interpretation we obtain
\begin{equation*}
	S_{\ell} = \prod_{j = 0}^{\ell-1}(q^{m}-q^{j}) \gauss{n}{\ell}_q \quad \mbox{with} \quad \gauss{n}{t}_q =
	\begin{cases}
		\displaystyle \prod_{i=0}^{t-1} \frac{q^n-q^i}{q^t-q^i} & \mbox{ if } t \le n, \\
		0 & \mbox{ otherwise.}
	\end{cases}
\end{equation*}

It follows by using the fact that $\prod_{i=1}^\infty (1-q^{-i})$ is some constant depending on $q$, that as $n \to +\infty$ 
\begin{equation}\label{eq:asymptRank1}
	\gauss{n}{\ell}_q = \Th{q^{\ell(n-\ell)}},
\end{equation}
\begin{equation}\label{eq:asymptRank2}
	S_{\ell} = \Th{q^{\ell(m + n -\ell)}}.
\end{equation}

From \eqref{eq:asymptRank2} we deduce that 
\begin{equation}\label{eq:ratio}
	\frac{S_{u+1}}{S_u} = \Th{\frac{q^{(u +1)(m +n -u -1)}}{q^{u(m +n -u)}}} = \Th{q^{m + n - 2u -1}}.
\end{equation}

\par{{\bf Quantum Fourier Transform (QFT).}}
The Quantum Fourier Transform $\QFT{\psi}$ of a state $\ket{\psi} \eqdef \sumFqn{x} \alpha_{\xv} \ket{\xv}$ is defined by using the characters $\chi_{\yv}$ of the additive group $\Fq^n$ (there are as many characters as there are elements in $\Fq^n$ and we assume that the characteristic of $\Fq$ is the prime $p$ with $q = p^s$) as
\begin{equation*}
	\QFT{\psi} \eqdef \frac{1}{\sqrt{q^n}} \sumFqn{y} \widehat{\alpha_{\yv}} \ket{\yv} \qquad \text{with} \qquad \widehat{\alpha_{\yv}} \eqdef \sumFqn{x} \alpha_{\xv} \chi_{\yv}(\xv)
\end{equation*}
where
\begin{eqnarray*}
	\chi_{\yv}(\xv) & \eqdef & e^{\frac{2i \pi \Tr(\xv \cdot \yv)}{p}}, \quad \text{with} \\
	\xv \cdot \yv & \eqdef & \sum_{i=1}^n x_i y_i \quad \text{with $\xv=(x_i)_{i=1}^n$ and $\yv=(y_i)_{i=1}^n$} \quad \text{and} \\
	\Tr(a) & \eqdef & a + a^p + a^{p^2} + \dots + a^{p^{s-1}}.
\end{eqnarray*}

A useful computation is the QFT of a state representing the $q$-ary symmetric channel of crossover probability $\tau$:
\begin{fact}\label{fa:Fourier}
	Let $\piber \eqdef \Rbra{\sqrt{1-\tau} \ket{0} + \sumFq{\alpha} \sqrt{\frac{\tau}{q-1}} \ket{\alpha}}^{\otimes n}$, where $\tau \in \Sbra{0,\frac{q-1}{q}}$, then 
	\begin{eqnarray}
		\piberQFT & = & \Rbra{\sqrt{1-\tau^\perp} \ket{0} + \sumFq{\alpha} \sqrt{\frac{\tau^\perp}{q-1}} \ket{\alpha}}^{\otimes n}, \quad \text{where} \\
		\tau^\perp & \eqdef & \frac{\Rbra{\sqrt{(q-1)(1-\tau)} - \sqrt{\tau}}^{2}}{q}. \label{def:tauPerp}
	\end{eqnarray}
\end{fact}
\begin{proof}
	Let $\ket{\psi} \eqdef \sqrt{1-\tau} \ket{0} + \sumFq{\alpha} \sqrt{\frac{\tau}{q-1}} \ket{\alpha}$. Then, it is readily verified that
	\begin{equation*}
		\piberQFT = \Rbra{\mQFT{\ket{\psi}}}^{\otimes n} \qquad \text{ where } \qquad \mQFT{\ket{\psi}} = \frac{1}{\sqrt{q}} \Rbra{\beta_0 + \sumFq{y} \beta_y \ket{y}} \qquad \text{with}
	\end{equation*}
	\begin{equation*}
		\beta_0 = \sqrt{1-\tau} + (q-1) \sqrt{\frac{\tau}{q-1}} \qquad \text{ and } \qquad \beta_y = \sqrt{1-\tau} + \sqrt{\frac{\tau}{q-1}} \sumFq{x} \chi_y(x).
	\end{equation*}
It is easy to verify that for any $y \in \Fq^*$ we have for any $x_0 \in \Fq^\ast$:
	\begin{equation*}
 		\sum_{x \in \Fq} \chi_y(x) =  \sum_{x \in \Fq} \chi_y(x_0 \cdot x) = \chi_y(x_0) \sum_{x \in \Fq} \chi_y(x)
	\end{equation*}
	Since there exists $x_0 \in \Fq^\ast$ such that $\chi_y(x_0) \neq 1$, we deduce that $\sum_{x \in \Fq} \chi_y(x)=0$. Now since $\sum_{x \in \Fq} \chi_y(x)= 1 + \sumFq{x} \chi_y(x)$, we have $\sumFq{x} \chi_y(x) = -1$ for any $y$ in $\Fq^\ast$ and therefore
	\begin{eqnarray*}
		\frac{\beta_y}{\sqrt{q}} & = & \sqrt{\frac{1-\tau}{q}} -  \sqrt{\frac{\tau}{q(q-1)}}\\
		& = & \sqrt{\frac{(q-1)(1-\tau)}{q(q-1)}} -  \sqrt{\frac{\tau}{q(q-1)}}\\
		& = & \frac{\sqrt{\frac{(q-1)(1-\tau)}{q}} -  \sqrt{\frac{\tau}{q}}}{\sqrt{q-1}}\\
		& = & \frac{\sqrt{\tau^\perp}}{\sqrt{q-1}}.
	\end{eqnarray*}
	It is then clear that $\frac{\beta_0}{\sqrt{q}} = \sqrt{1 - \tau^\perp}$ from $\sum_{y \in \Fq} \abs{\beta_y}^2 = 1$, concluding the proof.
\end{proof}

The dual code $\CC^\perp$ of a linear code $\CC$ over $\Fq$ is easily seen to be defined equivalently from the characters or from the inner product $\xv \cdot \yv$ as follows:
\begin{align*}
	\CC^\perp & \eqdef \Cbra{\yv \in \Fq^n \colon \forall \cv \in \CC, \; \chi_{\yv}(\cv) = 1}\\
	& =  \; \Cbra{\yv \in \Fq^n \colon \forall \cv \in \CC, \; \yv \cdot \cv = 0}.
\end{align*}

 	\section{Quantum Reduction from Sampling Short Codewords to Decoding}

\subsection{A general result}

We assume here that we have a probabilistic algorithm $\Ac$ that solves (sometimes) the decoding problem at distance $t$. Its inputs are a generator matrix $\Gm \in \Fq^{k \times n}$ of a code $\CC \subseteq \Fq^n$ ($\ie$, $\CC = \Cbra{\uv \Gm \colon \uv \in \Fq^k}$) and a noisy codeword $\cv + \ev$ where $\cv$ belongs to $\CC$. We denote by $\rv \in \F_2^\ell$ the internal coins of $\Ac$. It outputs with a certain probability $\varepsilon$, the ``right'' $\ev$ when being fed with $\cv + \ev$ where $\cv$ and $\ev$ are uniformly chosen at random in $\CC$ and among the errors of weight $t$ respectively:
\begin{equation}\label{eq:epsilon} 
	\varepsilon \eqdef \Prob_{\Gm, \cv, \ev, \rv} \Rbra{\Ac(\Gm, \cv+\ev, \rv) = \ev}.
\end{equation}

The quantum reduction starts by building the initial superposition
\begin{equation*}
	\frac{1}{\sqrt{2^\ell q^k}} \sumErrors \sumCode \sum_{\rv \in \F_2^\ell} \pi_{\ev} \ket{\ev} \ket{\cv} \ket{\rv}
\end{equation*}
where $\ket{\pi} \eqdef \sumErrors \pi_{\ev} \ket{\ev}$ is some quantum superposition of errors. In practice, we will voluntarily omit to write the internal coins of $\Ac$ for the sake of readability and will therefore start from
\begin{equation*}
	\frac{1}{\sqrt{q^k}} \sumErrors \sumCode \pi_{\ev} \ket{\ev} \ket{\cv}
\end{equation*}
For the same reason, we will write $\varepsilon = \Prob_{\Gm, \cv, \ev} \Rbra{\Ac(\Gm, \cv+\ev) = \ev}$.

The quantum algorithm that gives the reduction can then be described as follows.

\begin{center}{\bf Algorithm of the quantum reduction.}\\
\noindent\fbox{\parbox{\textwidth}{
\begin{align}
	&\text{Initial state preparation} & = & \quad \frac{1}{\sqrt{q^k}} \sumErrors \sumCode \pi_{\ev} \ket{\ev} \ket{\cv} \nonumber \\
	&\text{adding $\ev$ to $\cv$:} & \mapsto & \quad \frac{1}{\sqrt{q^k}} \sumErrors \sumCode \pi_{\ev} \ket{\ev} \ket{\cv +\ev} \nonumber \\
	&\text{applying $\Ac$:} & \stackrel{\Ac}{\mapsto} & \quad \frac{1}{\sqrt{q^k}} \sumErrors \sumCode \pi_{\ev} \ket{\ev - \Ac(\Gm,\cv+\ev)} \ket{\cv+\ev} \eqdef \psiA \label{eq:A} \\
	&\text{QFT on the $2$nd register:} & \mapsto & \quad \psiAQFT \label{eq:QFT} \\
	&\text{measuring the whole state:} & \mapsto & \quad \ket{\ev} \ket{\cv^\perp}  \label{eq:measure}
\end{align}}}
\end{center}

We will now give a general theorem about an algorithm of this kind and will show that it produces a codeword of the dual code $\CC^\perp$ of some weight $u$ with probability $\poly(\varepsilon)$ when certain conditions are met.

\begin{restatable}{theorem}{thmain} \label{th:main}
	Assume that $\ket{\pi}$ is radial and non-negative, $\ie$, $\pi_\ev = f(\abs{\ev})$ for some non-negative function $f$. Assume that $\widehat{\ket{\pi}} = \sumErrors \mQFT{\prob} \ket{\ev}$ is radial too\footnote{In other words, we assume that the Fourier transform is radially preserving. This property depends on the characters chosen to define the Fourier transform and the metric. Recall that a radial function is a function which is constant on spheres centered around $0$. This property clearly holds for functions $f \colon \Fq^n \rightarrow \C$ with the characters chosen here and the Hamming metric. We give this more general statement in order to apply it in other cases of interest, for instance the rank metric.} and let $\fperp(w) = \mQFT{\prob}$ for any element $\ev$ of $\Fq^n$ of weight $w$. Furthermore, assume that there exists an interval $\Wc \subseteq \ints{0,n}$ such that:
	\begin{align}
		& \text{(Concentration of $\pi$)} & \frac{\piun}{q^{n-k}} & = 2^{-\Omega(n)}, \tag{$C1$}\label{eq:concentrationPrimal}\\
		& \text{(Exponentially many dual codewords of weight $u \in \Wc$)} & \sum_{u \in \Wc} \frac{q^k}{S_u} & = 2^{-\Omega(n)}, \tag{$C2$}\label{eq:aboveGV}\\
		& \text{(Concentration of the dual distribution $\widehat{\pi}$ on $\Wc$)} & \sum_{u \in \Wc} S_{u} \abs{\fperp(u)}^{2} & = 1 - 2^{-\Omega(n)} \tag{$C3$}\label{eq:concentrationDual}
	\end{align}
	with $\ket{\unv}$ being the (unnormalized) superposition of errors : $\ket{\unv} \eqdef \sumErrors \ket{\ev}$.

	Suppose that there exists an algorithm $\Ac$ solving the decoding problem $\decP(q,n,k,t)$ with success probability $\varepsilon$. Then, there exists a quantum algorithm which takes as input a generator matrix $\Gm \in \Fq^{k \times n}$ of $\CC$ and outputs  a codeword of weight $u \in \Wc$ in $\CC^\perp$ with probability greater than $\frac{p_t^2 \varepsilon^3}{16} - O(p_t^4 \varepsilon^5)  - 2^{- \Omega(n)} - \OO{q^{-\min(k,n-k)}}$ where $p_t \eqdef \sum_{\ev \colon \abs{\ev} = t} \abs{\pi_{\ev}}^2 = S_t f(t)^2$.
\end{restatable}

\begin{remark}\label{rem:Conditions}\leavevmode
	\begin{itemize}
		\item We will use this theorem for Hamming and rank metrics, but it can be applied to {\em any} metric for which the Fourier transform is radially preserving.
		\item When $\pi_{\ev}$ is non-negative, Condition \eqref{eq:concentrationPrimal} basically requires the probability distribution on $\Fq^n$, $\mu \eqdef (\pi_\ev^2)_{\ev \in \Fq^n}$, to be sufficiently concentrated. This quantity can be expressed as $q^k (1- H^2(\mu,U))^2$, where $U$ stands for the uniform distribution over $\Fq^n$ and $H(\pv,\qv) \eqdef \sqrt{1 - \sum_{i} \sqrt{p_i q_i}}$ is the Hellinger distribution between two probability distributions $\pv$ and $\qv$ defined over a same probability space.
		\begin{enumerate}
			\item It is clearly maximal for the uniform probability distribution over $\Fq^n$, and
			\item on the other hand, when considering the Hamming metric and $\ket{\pi} = \piber$, we have 
				\begin{equation*}
					\frac{\braket{\piberp}{\unv}^{2}}{q^{n-k}} \; = \; \frac{q^k}{q^n} \abs{\sumFqn{y} \chi_{\yv}(\zerov) \pi_{\yv}}^2 \; = \; q^{k} \abs{\fperp(0)}^2 \; = \; q^{k}(1-\tau^{\perp})^n.
				\end{equation*}
				It can be verified that there is no way to choose $\tau$ such that at the same time:
				\begin{itemize} 
					\item[$(i)$] $\tau n \leq \dgv(n,k)$ (otherwise there is no hope to decode correctly most of the time),
					\item[$(ii)$] $\tau^\perp \leq \omeasy$ (otherwise finding codewords in $\CC^\perp$ of weight $\tau^\perp n$ is easy),
					\item[$(iii)$] $q^k (1-\tau^\perp)^n = o(1)$.
				\end{itemize}
				The quantity $\frac{\braket{\piberp}{\unv}^{2}}{q^{n-k}}$ is just too big, or in other words, the distribution of $\mu$ is too much spread out and not concentrated enough on its typical weight.
		\end{enumerate}
		\item Condition \eqref{eq:aboveGV} expresses that $u$ lies in a subset of values for which a random $[n,n-k]$-code has an exponential expected number of codewords. Indeed, the expected number of codewords of weight $u$ is equal to $\frac{S_u}{q^k}$.
		\item Finally, Condition \eqref{eq:concentrationDual} expresses that the dual probability distribution $\widehat{\mu} \eqdef (\abs{\widehat{\pi}_\ev}^2)_{\ev \in \Fq^n}$ is almost completely supported on $\Wc$ up to an exponentially small vanishing term.
	\end{itemize}
\end{remark}

\subsection{Outline of the proof of Theorem \ref{th:main}}

Let us first give a general outline of the proof before detailing each step.
\begin{itemize}
	\item[{\bf Step 1.}] We prove that after applying $\Ac$ in the reduction, $\psiA$ is close enough to the ``disentangled'' state
		\begin{equation}\label{eq:psiIdeal}  
			\psiideal \eqdef \frac{1}{\sqrt{Z}} \sumErrors \sumCode \pi_{\ev} \ket{\zerov_n} \ket{\cv+\ev}
		\end{equation} 
		where $Z$ is a normalizing constant.
	\item[{\bf Step 2.}] We then analyze the effect of the QFT on the ``ideal state'' $\psiideal$ and a subsequent measurement of it. We namely prove that measuring it produces a codeword $\cv^\perp \in \CC^\perp$ of weight $u$ with probability $\abs{\fperp(u)}^2$, up to a normalizing factor.
	\item[{\bf Step 3.}] Then we prove that the number of codewords of weight $u$ in $\CC^\perp$ is typically very close to $\frac{S_u}{q^k}$. With Step 2 and the assumptions of Theorem \ref{th:main}, we infer that the probability of observing a dual codeword of weight in the set $\Wc$ after measuring $\psiidealQFT$ is exponentially close to $1$.
	\item[{\bf Step 4.}] We upper-bound the statistical distance between the probability distribution of the states after measuring $\psiAQFT$ and $\psiidealQFT$ respectively by using Step 1 and the properties of the trace distance given in Fact \ref{fact:trace} below. 
\end{itemize}

Let us give more details about these steps.

\par{\bf Step 1.}
For this purpose, we use the trace distance between quantum states (as in  \cite{SSTX09} where this has been used in the lattice setting). It is defined 
as follows :
\begin{equation}\label{eq:def_dist_tr}
	\dtr(\ket{\phi},\ket{\psi}) \eqdef \sqrt{1 -\abs{\braket{\phi}{\psi}}^2}.
\end{equation}
This distance meets the following properties that will prove useful in our context:
\begin{fact}\label{fact:trace}\leavevmode
	\begin{enumerate}[label=\upshape(\Roman*),ref=\thefact (\Roman*)]
		\item\label{prop:trace1} It can never increase after a quantum evolution \cite[\S9,Th. 9.1]{NC16};
		\item\label{prop:trace2} The pair of probability distributions $(p_m, q_m)$ of the measurement outcome $m$ of any quantum measurement performed on the pair of states $(\ket{\phi}, \ket{\psi})$ satisfies \cite[\S9,Th. 9.2]{NC16}
			\begin{equation}\label{eq:dstat_dtr}
				\dstat(p_m,q_m) \leq \dtr(\ket{\phi}, \ket{\psi})
			\end{equation}
			where $\dstat$ is the statistical distance (also called the total variation distance) between two probability distributions. It is defined by:
			\begin{equation*}
				\dstat(p,q) \eqdef \frac{1}{2} \sum_{x \in \Xc} \abs{p(x)-q(x)}
			\end{equation*}
			where $p$ and $q$ are two discrete probability distributions on $\Xc$.
	\end{enumerate}
\end{fact}

With this notion we can prove that
\begin{proposition}\label{prop:step1}
	With probability greater than $1 - \frac{\piun}{q^{n-k}} - \OO{q^{-\min(k, n-k)}}$ over the choices of $\Gm$ we have:
	\begin{equation*}
		\dtr(\psiA,\psiideal) \leq \sqrt{1 - \frac{p_t^2}{2} \varepsilon_{\Gm}^2 },
	\end{equation*}
	where $\varepsilon_{\Gm}$ is the probability that $\Ac$ returns the right error $\ev$ when the input matrix is $\Gm$, $\ie$,
	\begin{equation}\label{eq:epsG}
		\varepsilon_{\Gm} \eqdef \Prob_{\cv,\ev} \Rbra{\Ac(\Gm, \cv+\ev) = \ev}.
	\end{equation}
\end{proposition}

The proof of this result follows immediately from three lemmas (whose proof is in Appendix \ref{app:mainStep1}). The first one bounds the trace distance in terms of $\varepsilon_{\Gm}$ and $Z$, and the second one gives a tight upper-bound on the expected value of $Z$ for a related probabilistic model. The latter is used to derive the third one which is fundamental and states that it is very unlikely for $Z$ to be much greater than the ``natural'' constant $q^k$:
\begin{restatable}{lemma}{lemdistalgoideal}\label{lem:dist_algo_ideal}
	We have:
	\begin{equation*}
		\dtr(\psiA, \psiideal) \leq \sqrt{1 - \frac{q^k p_t^2}{Z} \varepsilon_{\Gm}^2 }.
	\end{equation*}
\end{restatable}

\begin{restatable}{lemma}{lemEZ}\label{lem:EZ}
	Assume that $\CC$ is chosen by uniformly drawing at random a parity-check matrix $\Hm$ for it. We have:
	\begin{equation}\label{eq:ub_on_EZ}
		\esp(Z) \leq q^k \Rbra{1+ \frac{\piun}{q^{n-k}}}.
	\end{equation}
\end{restatable}

\begin{restatable}{lemma}{lemZ}\label{lem:Z}
	Let $\eta > 0$. We have:
	\begin{equation*}
		\Prob_{\Gm}(Z > q^k (1+\eta)) \leq \frac{1}{\eta} \; \frac{\piun}{q^{n-k}} + \OO{q^{-\min(k,n-k)}}.
	\end{equation*}
\end{restatable}

Proposition \ref{prop:step1} immediately follows by using $\eta=1$ in Lemma \ref{lem:Z} and plugging this bound on $Z$ in Lemma \ref{lem:dist_algo_ideal}. The quantity $q^k$ is the natural value for $Z$ since it is what we can expect when all the $\cv+\ev$ terms (taking all $\cv$ in $\CC$ and all typical $\ev$) are different. The constant $Z$ increases precisely when there are many collisions for the $\cv+\ev$ terms. However, in this case, we do not expect to be able to solve the decoding problem anymore.

\par{\bf Step 2.}
More precisely, we prove that
\begin{restatable}{lemma}{lemmeasure}\label{lem:measure}
	If the Fourier transform is radially preserving, meaning that it transforms a radial function into a radial function, then after measuring $\psiidealQFT$ we obtain a state $\ket{\zerov_n} \ket{\cv^\perp}$ with $\cv^\perp \in \CC^\perp$ of weight $u$ with probability $\frac{q^{2k}}{Z} N^\perp_u \abs{\fperp(u)}^2$ where $\fperp(u) \eqdef \reallywidehat{\pi}_{\ev}$ for an arbitrary $\ev$ of weight $u$ and $N^\perp_u$ is the number of codewords of weight $u$ in $\CC^\perp$.
\end{restatable}
The proof is given in Appendix \ref{app:mainStep2}.

\par{\bf Step 3.}
This step consists in quantifying how close to $1$ the probability of observing a dual codeword of weight in the set $\Wc$ after measuring $\psiidealQFT$ is. More specifically, we have
\begin{restatable}{proposition}{propstepthree}\label{prop:step2} 
	Under the assumptions made in Theorem \ref{th:main}, the probability of obtaining a codeword $\cv^\perp \in \CC^\perp$ of weight $u \in \Wc$ when measuring $\psiidealQFT$ is $\geq 1 - \alpha(\pi)$ for a proportion $\geq 1- \beta(\pi) $ of matrices $\Gm$, where:
	\begin{eqnarray*}
		\alpha(\pi) & \eqdef & \sum_{u \in \Wc}  \Rbra{\frac{q^{k}}{S_{u}}}^{1/4} + \sqrt{\frac{\piun}{q^{n-k}}} - 2^{-\Omega(n)}, \\
		\beta(\pi) & \eqdef & (q-1) \sum_{u \in \Wc} \sqrt{\frac{q^{k}}{S_{u}}} + \sqrt{\frac{\piun}{q^{n-k}}} + \OO{q^{-\min(k,n-k)}}.
	\end{eqnarray*}
\end{restatable}
This proposition is proved in Appendix \ref{app:mainStep3}.

\par{{\bf Step 4.}}
We first prove the following point 
\begin{lemma}\label{lem:good}
	Call $\Gc$ the set of ``good matrices''  $\Gm \in \Fq^{k \times n}$ that satisfy at the same time:
	\begin{itemize} 
		\item[$(i)$] $\varepsilon_{\Gm} \geq \varepsilon/2$ (where $\varepsilon$ and $\varepsilon_{\Gm}$ are defined in Equations \eqref{eq:epsilon} and \eqref{eq:epsG}),
		\item[$(ii)$] $Z \leq 2 q^k$.
	\end{itemize} 
	The proportion of good matrices is at least $\varepsilon/2 - \delta(\pi)$ where $\delta(\pi)\eqdef \frac{\piun}{q^{n-k}} +\OO{q^{-\min(k,n-k)}}$.
\end{lemma}
\begin{proof} By definition,
	\begin{equation*}
 		\varepsilon = \frac{1}{q^{kn}}\sum_{\Gm \in \Fq^{k\times n}} \varepsilon_{\Gm}. 
	\end{equation*}
	Let $\Bc$ be the set of matrices $\Gm$ that are not good, namely for which $(a)$ $\varepsilon_{\Gm} < \varepsilon/2$ or $(b)$ $Z > 2 q^{k}$. By Lemma \ref{lem:Z}, the density of matrices verifying $(b)$ is smaller than $\delta(\pi)$. Therefore, 
	\begin{equation*}
		\varepsilon \leq \frac{1}{q^{kn}}\sum_{\Gm \notin \Bc} 1 + \delta(\pi)\; \frac{\varepsilon}{2} \leq \frac{1}{q^{kn}}\sum_{\Gm \notin \Bc} 1 + \delta(\pi) + \frac{\varepsilon}{2}
	\end{equation*}
	which concludes the proof. 
\end{proof}
 
We use this lemma to prove that the statistical distance between the weight distributions obtained by measuring $\psiAQFT$ and $\psiidealQFT$ cannot be too far away:
\begin{lemma}\label{lem:nottofaraway}
	Let $P$, respectively $Q$, be the distribution of the weights $\abs{\cv^\perp}$ of the state $\ket{\ev} \ket{\cv^\perp}$ obtained by measuring the state $\psiAQFT$, respectively $\psiidealQFT$. We have
	\begin{equation*}
		\dstat(P,Q) \leq 1- \frac{p_t^2 \varepsilon^3}{16} + \OO{p_t^4 \varepsilon^5}+ \delta(\pi).
	\end{equation*}
\end{lemma}
\begin{proof}
	Let, 
	\begin{eqnarray*}
		P_{\Gm}(u) & \eqdef & \Prob_{\cv,\ev} \Rbra{\text{measuring $\ket{\cv^\perp}$ of weight $u$ in the $2$nd register of $\psiAQFT$ for a code choice $\Gm$}} \\
		Q_{\Gm}(u) & \eqdef & \Prob_{\cv,\ev} \Rbra{\text{measuring $\ket{\cv^\perp}$ of weight $u$ in the $2$nd register of $\psiidealQFT$ for a code choice $\Gm$}}
	\end{eqnarray*}
	We start the proof by noticing that
	\begin{eqnarray*}
		\dstat(P,Q) & = & \frac{1}{2} \sum_u \abs{P(u) - Q(u)} = \frac{1}{2} \sum_u \abs{\sum_{\Gm \in \Fq^{k \times n}} \frac{1}{q^{kn}} \Rbra{P_{\Gm}(u) - Q_{\Gm}(u)}}  \\
		& \leq & \frac{1}{q^{kn}} \sum_{\Gm \in \Fq^{k \times n}} \frac{1}{2} \sum_u \abs{P_{\Gm}(u) - Q_{\Gm}(u)} \\ 
		& = & \frac{1}{q^{kn}} \sum_{\Gm \in \Fq^{k \times n}} \dstat\Rbra{P_{\Gm}, Q_{\Gm}} \\
		& = & \sum_{\Gm \in \Gc} \frac{\dstat\Rbra{P_{\Gm}, Q_{\Gm}}}{q^{kn}} + \sum_{\Gm \notin \Gc} \frac{\dstat\Rbra{P_{\Gm}, Q_{\Gm}}}{q^{kn}} \\
		& \leq & \sum_{\Gm \in \Gc} \frac{\dtr\Rbra{\psiA, \psiideal}}{q^{kn}} + \sum_{\Gm \notin \Gc} \frac{1}{q^{kn}} \qquad \text{(by Equation \ref{eq:dstat_dtr})} \\
		& \leq & \sum_{\Gm \in \Gc} \frac{\sqrt{1 - \frac{p_t^2 \varepsilon^2}{4}}}{q^{kn}} + \sum_{\Gm \notin \Gc}  \frac{1}{q^{kn}} \qquad \text{(by Proposition \ref{prop:step1})} \\
		& \leq & \sqrt{1 - \frac{p_t^2 \varepsilon^2}{4}} \Rbra{\varepsilon/2 -\delta(\pi)} + 1 - \varepsilon/2 +\delta(\pi) \qquad \text{(by Lemma \ref{lem:good})} \\
		& \leq & \Rbra{\varepsilon/2 -\delta(\pi)} \Rbra{1- \frac{p_t^2 \varepsilon^2}{8} + \OO{p_t^4 \varepsilon^4}} + 1 - \varepsilon/2 + \delta(\pi) \\
		& \leq & 1 - \frac{p_t^2 \varepsilon^3}{16} + \OO{p_t^4 \varepsilon^5} + \delta(\pi)
	\end{eqnarray*}
which concludes the proof. 
\end{proof}

\subsection{Proof of Theorem \ref{th:main}}
	
We are now ready to prove Theorem \ref{th:main}. By Proposition \ref{prop:step2} we know that 
\begin{equation*}
	\sum_{u \in \Wc} Q(u) \geq (1-\alpha(\pi))(1-\beta(\pi)) \geq 1 - \alpha(\pi) - \beta(\pi).
\end{equation*}
But now we have the following computation,
\begin{align*}
	\sum_{u \in \Wc} P(u) & \geq \sum_{u \in \Wc} Q(u) - \dstat(P,Q) \\
	& \geq 1 - \alpha(\pi) - \beta(\pi) - 1 + \frac{p_{t}^{2}\varepsilon^{3}}{16} - \OO{p_{t}^{4}\varepsilon^{5}} - \delta(\pi) \\
	& = \frac{p_{t}^{2}\varepsilon^{3}}{16} - \OO{p_{t}^{4}\varepsilon^{5}} - \alpha(\pi) - \beta(\pi) - \delta(\pi)
\end{align*}
which concludes the proof by definition of $\alpha(\pi)$, $\beta(\pi)$ and $\delta(\pi)$.

\subsection{Application to the Hamming metric}

The assumptions of Theorem \ref{th:main} will be satisfied for the Hamming metric for weights $u$ close to $\tau^\perp n$ (where $\tau^{\perp}$ is given in Equation \eqref{def:tauPerp}) and we will prove that

\begin{theorem}\label{theo:Hamming} 
	Suppose that there exists an algorithm $\Ac$ solving with success probability $\varepsilon$ the decoding problem $\decP(q,n,k,t)$ at Hamming distance $1 \leq t \eqdef \tau n \leq (1-\delta) \dgv(n,k)$ for any arbitrary $\delta > 0$. Then, there exists a quantum algorithm which takes as input a generator matrix $\Gm \in \Fq^{k\times n}$ of a code $\CC \subseteq \Fq^n$ and outputs $\cv^{\perp} \in \CC^{\perp}$ of weight $u \in \ints{(1-\alpha)\tau^{\perp}n, (1+\alpha) \tau^{\perp}n}$ (where $\alpha$ is any arbitrary constant $> 0$) with probability over a uniform choice of $\Gm$ given by a $\Om{\frac{\varepsilon^{3}}{n} - \OO{\frac{\varepsilon^{5}}{n^{2}}} - 2^{-\Om{n}}}$ where:
	\begin{equation}\label{eq:tauPerp}
		\tau^{\perp} \eqdef \frac{1}{q} \Rbra{\sqrt{(q-1)(1-\tau)} - \sqrt{\tau}}^2.
	\end{equation}
\end{theorem}

The proof of this theorem relies on Theorem \ref{th:main}, for a suitable choice of quantum state $\ket{\pi}$. This is done by choosing $\ket{\pi} = \pitrunc$ which represents a truncated $q$-ary symmetric channel of crossover probability $\tau$. All its weights are in an interval $\ints{(1-\eta)t, (1+\eta)t}$ where $\eta$ is some positive constant which will be chosen later on. 
More precisely, let us first define the (untruncated) quantum state representing the $q$-ary symmetric channel of crossover probability $\tau$:
\begin{equation*}
	\piber \eqdef \Rbra{\sqrt{1-\tau} \ket{0} + \sqrt{\tau/(q-1)} \sumFq{\alpha} \ket{\alpha}}^{\otimes n}.
\end{equation*}
Indeed, since $\piber$ can also be written as 
\begin{equation*}
	\piber = \sumFqn{e} \prob \ket{\ev} \quad \text{ with } \quad \prob \eqdef \sqrt{(1-\tau)^{n-\abs{\ev}} \Rbra{\frac{\tau}{q-1}}^{\abs{\ev}}},
\end{equation*}
measuring $\piber$ mimics the error we have in a $q$-ary symmetric channel of crossover probability $\tau$, $\ie$,
\begin{equation*}
	\Prob(\text{measurement outputs $\ev$}) = \abs{\prob}^2 = \Rbra{\frac{\tau}{q-1}}^{\abs{\ev}} (1-\tau)^{n-\abs{\ev}}.
\end{equation*}
We will be interested in the truncated version given by
\begin{equation*}
	\pitrunc \eqdef \sum_{\substack{\ev \in \Fq^n \colon \\ \abs{\ev} \in \ints{(1-\eta)t, (1+\eta)t}}} \pitrunce \ket{\ev} \quad \text{ with } \quad \pitrunce \eqdef
	\begin{cases}
      \frac{\prob}{\sqrt{N}} & \text{if $\abs{\ev} \in \ints{(1-\eta)t, (1+\eta)t}$}\\
      0 & \text{otherwise}
    \end{cases}
\end{equation*}
where $N$ is the normalizing constant given by
\begin{equation}\label{eq:N}
	N \eqdef \sum_{\substack{\ev \in \Fq^n \\ \abs{\ev} \in \ints{(1-\eta)t, (1+\eta)t}}} |\prob|^2.
\end{equation}
It will be helpful to notice that for all $\eta >0$, $N$ is exponentially close to $1$:
\begin{lemma}\label{lemma:NHam} For all $\eta > 0$, we have 
	\begin{equation*}
		N = 1 - 2^{-\Om{n}}.
	\end{equation*}
\end{lemma}
\begin{proof}
Notice that by Equation \eqref{eq:N},
	\begin{multline} \label{eq:1mN}
		1-N = \sum_{\substack{\ev \in \Fq^n \\ \abs{\ev} \not\in \ints{(1-\eta)t, (1+\eta)t}}} \abs{\prob}^2 = \Prob_{\ev}(\abs{\ev} \notin \ints{(1-\eta)t, (1+\eta)t}) \\ = \Prob_{\ev}(\abs{\ev} < (1-\eta)\tau n) \;\; + \;\; \Prob_{\ev}(\abs{\ev} > (1+\eta)\tau n)
	\end{multline}
	where $\abs{\ev}$ is the sum of $n$ independent (binary) Bernoulli random variables of parameter $\tau$. Therefore, by Hoeffding's inequality, we have for all $\eta > 0$,   
\begin{equation*}
		\Prob_{\ev}(\abs{\ev} < (1-\eta)\tau n) \le e^{-2 \eta^2 \tau^2 n} \quad \text{and} \quad \Prob_{\ev}(\abs{\ev} > (1+\eta)\tau n) \le e^{-2 \eta^2 \tau^2 n}
	\end{equation*}
	which concludes the proof by plugging this in Equation \eqref{eq:1mN}.
\end{proof} 

Theorem \ref{theo:Hamming} is proved by showing that $\pitrunc$ satisfies all the requirements of Theorem \ref{th:main} when $\eta$ is small enough.

\subsection*{Step 1: Verification of Condition (\ref{eq:concentrationPrimal}).}
This amounts to proving the following lemma
\begin{lemma}\label{lem:piHam}
	For $\eta > 0$ small enough, we have,
	\begin{equation*}
		\frac{\braket{\pitruncp}{\unv}^2}{q^{n-k}} = 2^{-\Om{n}}.
	\end{equation*}
\end{lemma} 

Before proving this result, it will be helpful to notice that:
\begin{lemma}\label{lemma:GVHam} If $u \leq (1-\delta) \dgv(n,k)$ for some $\delta > 0$, then
	\begin{equation*}
		\frac{S_{u}}{q^{n-k}} = 2^{-\Om{n}}.
\end{equation*}
\end{lemma}
\begin{proof} Recall that the size $B_u$ of the Hamming ball of radius $u$ is of the form
	\begin{equation*}
		B_u = q^{n \; h_q(\mu)(1+o(1))}
	\end{equation*}
	where $\mu \eqdef u/n$. From this we obtain
	\begin{eqnarray*}
		\frac{S_u}{q^{n-k}} & \leq & \frac{B_u}{B_{\dgv}} \qquad \text{(since $S_u \leq B_u$ and $B_{\dgv} \leq q^{n-k}$)}\\
		& \leq & q^{n(h_q(\mu) - h_q(\drgv) + o(1))}\\
		& \leq & q^{n(h_q((1-\delta)\drgv) - h_q(\drgv) + o(1))}.
	\end{eqnarray*}
	We finish the proof by noticing that $h_q((1-\delta)\drgv) - h_q(\drgv) < 0$.
\end{proof}

We are now ready to prove Lemma \ref{lem:piHam}.

\begin{proof}[Proof of Lemma \ref{lem:piHam}]
We have the following computation,
	\begin{align}
		\braket{\pitruncp}{\unv} & = \sum_{\substack{\ev \in \Fq^n \\ \abs{\ev} \in \ints{(1-\eta)t, (1+\eta)t}}} \pitrunce \nonumber \\
		& = \sum_{r \in \ints{(1-\eta)t, (1+\eta)t}} \sum_{\substack{\ev \in \Fq^n \\ \abs{\ev} = r}} \frac{\prob}{\sqrt{N}} \nonumber \\
		& = \frac{1}{\sqrt{N}} \sum_{r \in \ints{(1-\eta)t, (1+\eta)t}} \sqrt{\binom{n}{r} (q-1)^r} \; \sqrt{\binom{n}{r}(q-1)^{r} \; (1-\tau)^{n-r} \Rbra{\frac{\tau}{q-1}}^r} \nonumber \\
		& \le \frac{1}{\sqrt{N}} \sum_{r \in \ints{(1-\eta)t, (1+\eta)t}} \sqrt{\binom{n}{r} (q-1)^r} \label{eq:BpiT} 
	\end{align}
	where in the last line we used that $\binom{n}{u}(q-1)^{u} \Rbra{\frac{\tau}{q-1}}^u (1-\tau)^{n-u} \leq 1$ for any $u \in \ints{0,n}$. Therefore, using Equation \eqref{eq:BpiT}, we have the following computation 
\begin{align*}
		\braket{\pitruncp}{\unv}^2 & \le \frac{1}{N} \Rbra{\sum_{r = \ceil{(1-\eta)t}}^{{\floor{(1+\eta)t}}} \sqrt{\binom{n}{r} (q-1)^r}} \Rbra{\sum_{r' = \ceil{(1-\eta)t}}^{{\floor{(1+\eta)t}}} \sqrt{\binom{n}{r'} (q-1)^{r'}}} \\
		& \le \frac{(n+1)^2}{N} \max\limits_{r,r' \in \ints{(1-\eta)t,(1+\eta)t}} \sqrt{\binom{n}{r} \binom{n}{r'} (q-1)^{r} (q-1)^{r'}} \quad \Rbra{\text{since $\sharp \ints{(1-\eta)t, (1+\eta)t} \le n+1$}} \\
		& \le \frac{(n+1)^2}{N} \binom{n}{(1+\eta) t} (q-1)^{(1+\eta) t} \quad \Rbra{\text{the $\max$ is reached for $r=r'= (1+\eta)t$}} \\
		& = \frac{(n+1)^2}{N} \; S_{(1+\eta)t}
\end{align*}
	Using this last inequality and Lemmas \ref{lemma:NHam} and \ref{lemma:GVHam}, we obtain
	\begin{equation*}
		\frac{\braket{\pitruncp}{\unv}^2}{q^{n-k}} = \OO{(n+1)^{2} \; \frac{S_{(1+\eta)t}}{q^{n-k}}} = 2^{-\Om{n}}
	\end{equation*}
	where we choose $\eta$ small enough such that $(1+\eta)t \leq (1-\delta')\dgv(n,k)$ for some $\delta' > 0$ (recall that by assumption $t \leq (1-\delta)\dgv(n,k)$ for $\delta > 0$).  
\end{proof}

\begin{remark}
	As explained in the introduction and Remark \ref{rem:Conditions}, Lemma \ref{lem:piHam} is not satisfied by $\piber$. Here truncating the error distribution is essential to verify this concentration lemma.
\end{remark}

\subsection*{Step 2: Verification of Conditions (\ref{eq:aboveGV}) and (\ref{eq:concentrationDual}).}
We prove here that the Conditions (\ref{eq:aboveGV}) and (\ref{eq:concentrationDual}) of Theorem \ref{th:main} are met by $\pitrunc$. This results from a combination of arguments: (i) these two conditions are met for $\piber$ (ii) $\piber$ and $\pitrunc$ are very close and so are $\piberQFT$ and $\pitruncQFT$ (because they are obtained from the first pair by applying a QFT, which is unitary).

More precisely we are going to prove that
\begin{lemma}\label{lem:NegliEtPolyHam} 
	Let $t^{\perp} \eqdef \frac{(\sqrt{(q-1)(n-t)} - \sqrt{t})^2}{q}$ and $\alpha>0$ be some constant small enough. We let $\pitruncQFT = \sumErrors \pitrunceQFT \ket{\ev}$. This state is radial and we let $\ftruncperp(w) = \pitrunceQFT$ for any element $\ev$ of $\Fq^n$ of Hamming weight $w$. 
We have
	\begin{equation*}
		p_t = \Om{\frac{1}{\sqrt{n}}},
	\end{equation*}
	\begin{equation*}
		\forall u \in \ints{t^{\perp}(1- \alpha), t^{\perp}(1+ \alpha)}, \quad \frac{q^{k}}{S_{u}} = 2^{-\Om{n}} \quad \mbox{and} 
	\sum_{u = \ceil{t^{\perp}(1- \alpha)}}^{\floor{t^{\perp}(1+ \alpha)}} S_{u} \abs{\ftruncperp(u)}^2 = 1 - 2^{-\Om{n}}.
	\end{equation*}
\end{lemma}

As explained above, to prove this result we will rely on the following lemma
\begin{lemma}\label{lemma:NegliTruncBer}
	For all $\eta > 0$, we have
	\begin{equation}
		\norm{\pitrunc - \piber} = 2^{-\Om{n}} \quad \text{and} \quad \norm{\reallywidehat{\pitrunc} - \reallywidehat{\piber}} = 2^{-\Om{n}}
	\end{equation}
\end{lemma}
\begin{proof}
	We have the following computation,
	\begin{align*}
		\norm{\pitrunc - \piber}^2 & = \sumErrors (\prob - \pitrunce)^2 \\
		& = \sum_{\substack{\ev \in \Fq^n \colon \\ \abs{\ev} \not\in \ints{(1-\eta)t, (1+\eta)t}}} \prob^2 + \sum_{\substack{\ev \in \Fq^n \colon \\ \abs{\ev} \in \ints{(1-\eta)t,(1+\eta)t}}} \Rbra{\prob - \frac{\prob}{\sqrt{N}}}^2 \\
		& = 1-N + \frac{(1-\sqrt{N})^2}{N} \sum_{\substack{\ev \in \Fq^n \\ \abs{\ev} \in \ints{(1-\eta)t, (1+\eta)t}}} \prob^2 \quad \text{ (by Equation \eqref{eq:N})} \\
		& = 1-N + (1-\sqrt{N})^2 \quad \text{(by Equation \eqref{eq:N})} \\
		& \le 2^{-\Om{n}} \quad \text{(by Lemma \eqref{lemma:NHam}).}
	\end{align*}
	The second relation follows, since the QFT is an isometry with respect to $\norm{\cdot}$.
\end{proof}

With this lemma at hand, we are ready to prove Lemma \ref{lem:NegliEtPolyHam}.
\begin{proof}[Proof of Lemma \ref{lem:NegliEtPolyHam}.]
	By definition,
	\begin{equation*}
		p_{t} = \frac{1}{N} \sum_{\ev \colon \abs{\ev} = t} (1-\tau)^{n-t} \Rbra{\frac{\tau}{q-1}}^t = \frac{\binom{n}{t}(q-1)^t \; q^{n h_{q}(\tau)}}{N} = \Om{\frac{1}{\sqrt{n}}}
	\end{equation*}
	where in the last equality we used Lemma \ref{lemma:NHam} and Stirling's formula.
	
	The equality $\frac{q^{k}}{S_{u}} = 2^{-\Om{n}}$ is verified when $u$ is sufficiently close to $t^{\perp}$ ($\alpha$ small enough), because it is readily verified that there exists some constant $\beta>0$ such that $t^\perp \geq (1+\beta) \dgv(n,n-k)$. This together with $t^\perp \leq \frac{(q-1)n}{q}$ implies that $\frac{q^{k}}{S_u} = 2^{-\Om{n}}$ for any $u$ in $\ints{t^{\perp}(1- \alpha), t^{\perp}(1+ \alpha)}$.
	
	The untruncated distribution $\piber = \sum_\ev \pi_\ev \ket{\ev}$ is radial, and so is its Fourier transform $\piberQFT = \sum_\ev \widehat{\pi}_\ev \ket{\ev}$. We let $\fperp(u) = \widehat{\pi}_\ev$ where $\ev$ is any $\ev \in \Fq^n$ of Hamming weight $u$. We notice that
	\begin{equation*}
		\sum_{u = \ceil{t^{\perp}(1- \alpha)}}^{\floor{t^{\perp}(1+ \alpha)}} S_u \abs{\fperp(u)}^2 = \sum_{\ev \in \Fq^n \colon \abs{\ev} \in \Sbra{(1-\alpha) t^\perp, (1+\alpha) t^\perp}} \abs{\widehat{\pi}_{\ev}}^{2} = \Prob_{\ev} \Rbra{\abs{\ev} \in \Sbra{(1-\alpha) t^\perp n, (1+\alpha) t^\perp n}},
	\end{equation*} 
	where 
$\tau^\perp = \frac{t^\perp}{n}$ and $\abs{\ev}$ is the sum of $n$ independent (binary) Bernoulli random variables of parameter $\tau^\perp$. Therefore, by using Hoeffding's bound again we obtain that
	\begin{equation}\label{eq:fperp}   
		\sum_{u \in [(1- \alpha) t^\perp, (1+ \alpha) t^\perp]} S_u \abs{\fperp(u)}^2 = 1 - 2^{-\Om{n}}.
	\end{equation} 
	meaning that $\fperp$ concentrates around vectors of weight $t^\perp$.
Consider the projection of $\piQFT$ and $\pitruncQFT$ on the space spanned by the states $\ket{\ev}$ for $\abs{\ev} \in \ints{(1-\alpha) t^\perp, (1+\alpha) t^\perp}$:
	\begin{eqnarray*}
		\piQFTp & \eqdef & \sum_{\ev \in \Fq^n \colon \abs{\ev} \in [(1- \alpha) t^\perp, (1+ \alpha) t^\perp]} \widehat{\pi}_\ev \ket{\ev}\\
		\pitruncQFTp & \eqdef & \sum_{\ev \in \Fq^n \colon \abs{\ev} \in [(1- \alpha) t^\perp, (1+ \alpha) t^\perp]} \pitrunceQFT \ket{\ev}
	\end{eqnarray*}
Since a projection can only reduce the norm, we have
	\begin{equation}\label{eq:distance_projection}
		\norm{\piQFTp - \pitruncQFTp} \leq \norm{\piQFT - \pitruncQFT} = 2^{-\Om{n}}.
	\end{equation}
	We deduce from the triangle inequality that 
	\begin{equation}
		\norm{\pitruncQFTp} \geq \norm{\piQFTp} - \norm{\piQFTp - \pitruncQFTp},
	\end{equation}	
	and then from Equations \eqref{eq:distance_projection} and \eqref{eq:fperp} \Big(which says $ \norm{\piQFTp}^2 = 1 - 2^{-\Om{n}}$\Big) that $\norm{\pitruncQFTp} \geq 1 - 2^{-\Om{n}}$. Since $\norm{\pitruncQFTp} \leq \norm{\pitruncQFT} = 1$ we finally obtain
	\begin{equation*}
		\norm{\pitruncQFTp} = 1 - 2^{-\Om{n}}.
	\end{equation*}
	This directly implies $\sum_{u \in \ints{(1- \alpha) t^\perp, (1+ \alpha) t^\perp}} S_u \abs{\ftruncperp(u)}^2 = \norm{\pitruncQFTp}^2 = 1 - 2^{-\Om{n}}$.
\end{proof}

\subsection*{Proof of Theorem \ref{theo:Hamming}.}

This immediately follows from Lemmas \ref{lem:piHam} and \ref{lem:NegliEtPolyHam} which show that the relevant assumptions of Theorem \ref{th:main} are verified for the
choice $\ket{\pi} = \pitrunc$ for $\eta$ small enough.

\subsection{Application to the rank metric}

The assumptions of Theorem \ref{th:main} will also be satisfied in the context of codes $\CC \subseteq \Fq^{m \times n}$ embedded with the rank metric (given some matrix in $\Fq^{m \times n}$, its weight is defined as its rank). The Gilbert-Varshamov distance $\dgv(m,n,k)$ is defined in a similar way, but it depends on three parameters here and corresponds to the largest radius $t$ of a ball in the rank metric for which 
\begin{equation*}
	q^k  B_t \leq q^{m \times n}.
\end{equation*}

We will be able to prove that
\begin{theorem}\label{theo:rank} 
	Suppose that there exists an algorithm $\Ac$ solving with success probability $\varepsilon$ the decoding problem at rank distance $1 \leq t < \dgv(m,n,k)$ where $m \geq n$. Then, there exists a quantum algorithm which takes as input a generator matrix of $\CC \subseteq \Fq^{m \times n}$ and outputs $\cv^{\perp} \in \CC^{\perp}$ of weight $u \in (t^{\perp}-\eta n, t^{\perp}]$ where $t^{\perp} \eqdef n-t$ (for any arbitrary constant $\eta > 0$) with probability over a uniform choice of the generator matrix given by a $\Om{\varepsilon^{3} - \OO{\varepsilon^{5}} - 2^{-\Om{n}}}$.
\end{theorem}

\begin{remark}
	Our assumption that $m \geq n$ can be done without loss of generality. In the case where $n > m$ we can just consider the transposed code $\transpose{\CC} \eqdef \Cbra{\transpose{\Mm} \colon \Mm \in \CC}$: taking the transpose is a linear automorphism and can be used to transform any algorithm decoding $\CC$ into an algorithm decoding $\transpose{\CC}$ with the same complexity.
\end{remark}

As in the Hamming case, Theorem \ref{theo:rank} will be a consequence of Theorem \ref{th:main}. Therefore we first have to choose appropriately a quantum state $\ket{\pi}$ that will model the noise distribution.

\subsection*{Step 1 : Choosing $\ket{\pi}$.} Let,
\begin{eqnarray}
	\ket{\pi} & \eqdef & \frac{1}{\sqrt{N}} \sum\limits_{\substack{V \le \Fq^n \\ \dim V = t } } \ket{\pi_V} \quad \text{where} \label{eq:quditRank} \\
	\ket{\pi_U} & \eqdef & \Rbra{\frac{1}{\sqrt{q^{\dim U}}} \sum_{\uv \in U} \ket{\uv}}^{\otimes m} \label{eq:piU} 
\end{eqnarray}
and $N$ is a normalizing constant. $\ket{\pi_U}$ can be viewed as a uniform superposition of matrices whose rows belong to $U$. These matrices have rank at most $\dim U$ and $\ket{\pi}$ is close to a uniform distribution of all matrices of rank $t$. We also have the following alternative description for $\ket{\pi}$.
\begin{restatable}{lemma}{lemmaF}\label{lemma:F}
	$\ket{\pi}$ is radial, $\ie$, we may write $\ket{\pi}$ as 
	\begin{equation*}
		\ket{\pi} = \sum_{\Em \in \Fq^{m \times n} \colon \abs{\Em} \leq t} \pi_{\Em} \ket{\Em} 
	\end{equation*}
	with $\ket{\Em} \eqdef \ket{\Em_{1}} \otimes \dots \otimes \ket{\Em_{m}}$ where the $\Em_i$'s denote the rows of $\Em$ and $\pi_{\Em} = f(\abs{\Em})$ where
	\begin{equation*}
		f(u) = \left\{
		\begin{array}{ll}
			\frac{\gauss{n-u}{t-u}_q}{\sqrt{q^{mt}N}} & \mbox{ if } u \leq t, \\
			0 & \mbox{ otherwise.}
		\end{array}
		\right.
	\end{equation*}
\end{restatable}

This lemma is proved in Appendix \ref{app:rankStep1}, as well as the following one, which gives estimations for $N$ and $p_t$ in order to apply Theorem \ref{th:main}.
\begin{restatable}{lemma}{lemmaNRank}\label{lemma:NRank} We have:
	\begin{equation*}
		N = \Th{\gauss{n}{t}_q} \quad \text{ and } \quad p_t = \Th{1}. 
	\end{equation*}
\end{restatable}

\subsection*{Step 2: Verification that $\mQFT{\ket{\pi}}$ is radial.}
The following proposition states that $\mQFT{\ket{\pi}}$ has actually the same form as $\ket{\pi}$ where $t$ is replaced by $n-t$:
\begin{proposition}\label{propo:FourierRank} We have,
	\begin{equation*}
		\reallywidehat{\ket{\pi}} = \frac{1}{\sqrt{N}} \sum\limits_{\substack{W \le \Fq^n \\ \dim W = n-t}} \ket{\pi_W}.
	\end{equation*}
\end{proposition}
\begin{proof} 
	We apply the QFT on $\ket{\pi}$, defined in Equation \eqref{eq:quditRank}. It gives,
	\begin{equation*}
		\reallywidehat{\ket{\pi}} = \frac{1}{\sqrt{N}} \sum\limits_{\substack{V \le \Fq^n \\ \dim V = t}} \Rbra{\frac{1}{\sqrt{q^{n+t}}} \sum_{\yv \in \Fq^n} \Rbra{\sum_{\vv \in V} \chi_{\yv}(\vv)} \ket{\yv}}^{\otimes m}
	\end{equation*}
	By distinguishing the cases where $\yv \in V^\perp$ (the dual of $V$ with the standard inner product) or not: 
	\begin{equation*} 
		\mQFT{\ket{\pi}} = \frac{1}{\sqrt{N}} \sum\limits_{\substack{V \le \Fq^n \\ \dim V = t } } \Rbra{\frac{1}{\sqrt{q^{n+t}}} \sum_{\yv \in V^\perp} q^{t} \ket{\yv}}^{\otimes m} = \frac{1}{\sqrt{N}} \sum\limits_{\substack{W \le \Fq^n \\ \dim W = n-t}} \Rbra{\frac{1}{\sqrt{q^{n-t}}} \sum_{\yv \in W} \ket{\yv}}^{\otimes m} \\
	\end{equation*} 
	which concludes the proof.  
\end{proof}

We can now straightforwardly apply Lemma \ref{lemma:F} on $\mQFT{\ket{\pi}}$ and obtain
\begin{restatable}{lemma}{lemmaPiFourierRankRadial}\label{lemma:piRank} 
	The state $\mQFT{\ket{\pi}}$ is radial and can be written as $\sum_{\Em \in \Fq^{m \times n} \colon \abs{\Em} \leq n-t} \mQFT{\pi_{\Em}} \ket{\Em}$. If we let $\fperp(u) \eqdef \mQFT{\pi_{\Em}}$ for any $\Em \in \Fq^{m \times n}$ of rank $u$, we have
	\begin{equation*}
		\fperp (u) = \left\{
		\begin{array}{ll}
			\frac{\gauss{n-u}{n-t-u}_q}{\sqrt{q^{m(n-t)}N}} & \mbox{ if } u \leq n-t, \\
			0 & \mbox{ otherwise.}
		\end{array}
		\right. 
	\end{equation*}
\end{restatable}

\subsection*{Step 3: Verification of Conditions (\ref{eq:concentrationPrimal}), (\ref{eq:aboveGV}) and (\ref{eq:concentrationDual}).}
This is achieved in the following Lemmas that are proved in Appendix \ref{app:rankStep3}.
\begin{restatable}{lemma}{lemmaGVRank}\label{lemma:GVRank}
	We have,
	\begin{equation*}
		\frac{S_t}{q^{mn-k}} = q^{-\Om{n}} \quad \mbox{and } \quad \frac{\braket{\pi}{\unv}^2}{q^{mn-k}} = q^{-\Om{n}}.
	\end{equation*}
\end{restatable} 

\begin{restatable}{lemma}{lemmaTermeNegliEtPolyRank}\label{lemma:NegliEtPolyRank} For any $\eta>0$, we have, 
\begin{equation*}
		\forall u \in \ints{(1-\eta)n-t, n-t}, \; \frac{q^k}{S_u} = q^{-\Om{n}} \quad \mbox{and} \quad \sum_{u \in \ints{(1-\eta)n-t, n-t}} S_u \abs{\fperp(u)}^2 = 1 - q^{-\Om{n}}.
	\end{equation*}
\end{restatable}

\subsection*{Proof of Theorem \ref{theo:rank}.}
This follows from Lemmas \ref{lemma:F}, \ref{lemma:piRank}, \ref{lemma:GVRank} and \ref{lemma:NegliEtPolyRank} that allow to apply Theorem \ref{th:main}, completing the proof.

 	\section{About the usefulness of our reduction.}\label{sec:useful}

It is now interesting to look at the parameters for which our reduction is useful for both the Hamming and rank metrics. 

\subsection{Hamming case}

\begin{figure}[!h]
	\centering
	\begin{subfigure}[b]{0.495\textwidth}
		\includegraphics[width=\textwidth]{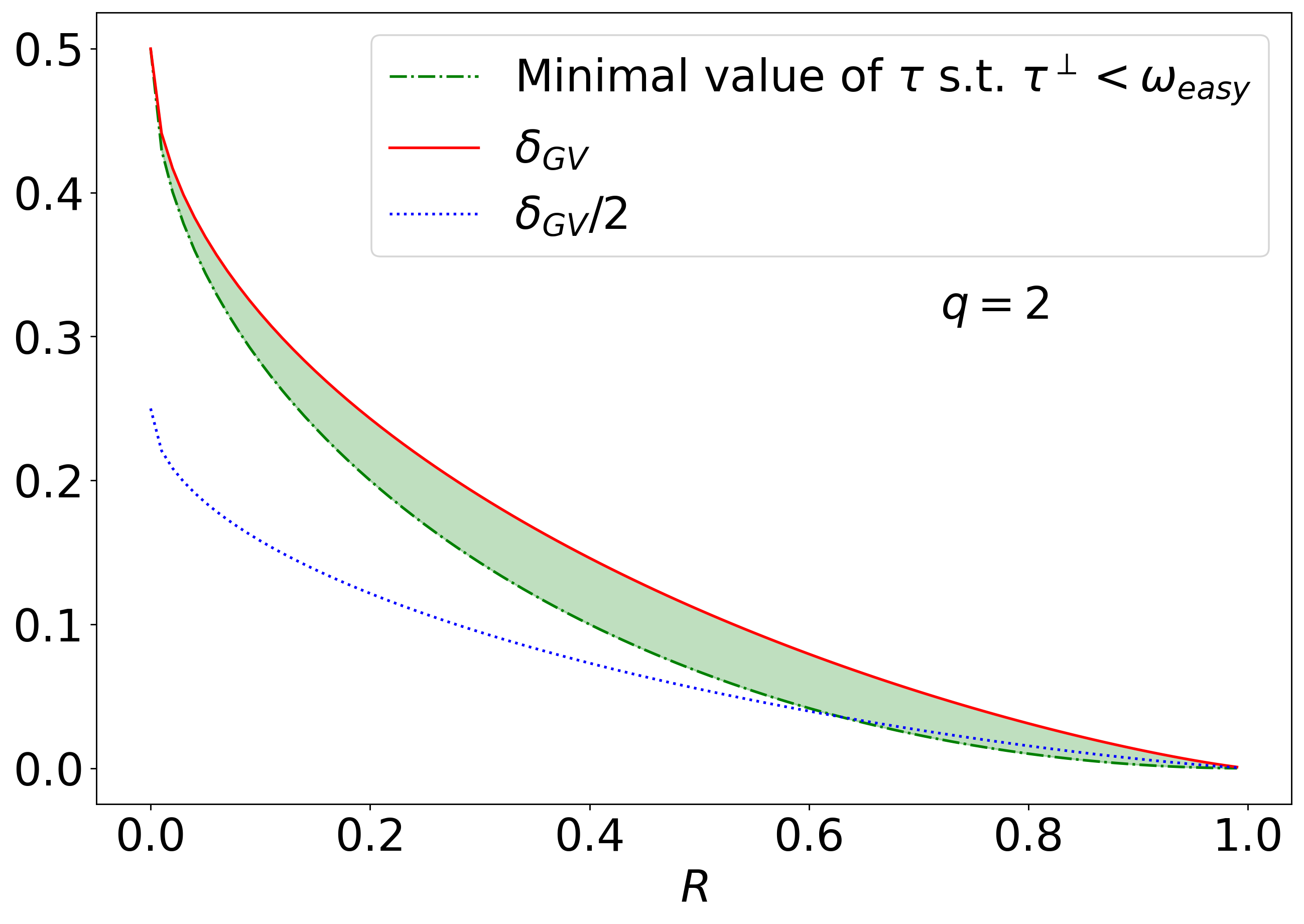}
	\end{subfigure}
	\begin{subfigure}[b]{0.495\textwidth}
		\includegraphics[width=\textwidth]{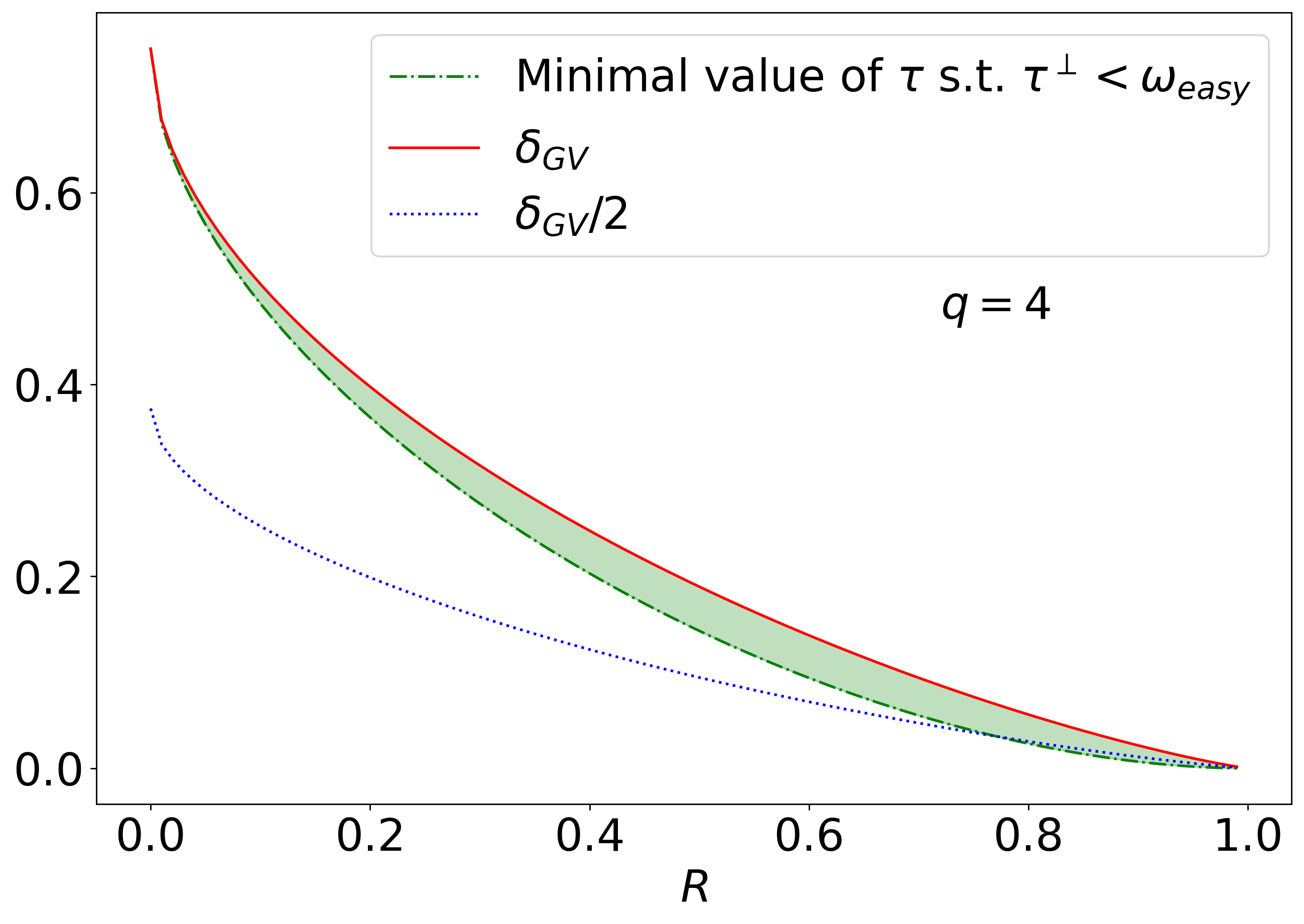}
	\end{subfigure}
	\begin{subfigure}[b]{0.495\textwidth}
		\includegraphics[width=\textwidth]{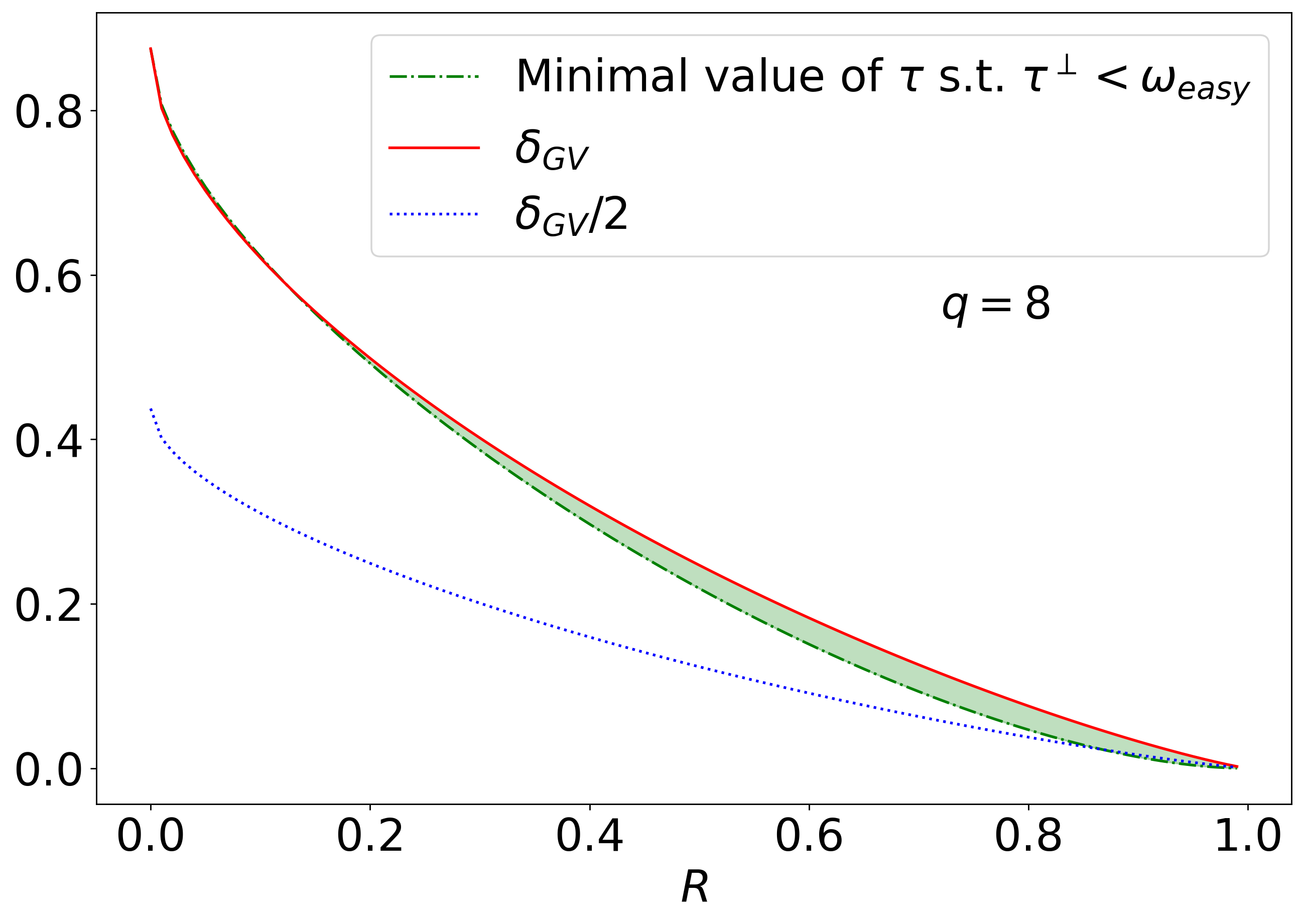}
	\end{subfigure}
	\begin{subfigure}[b]{0.495\textwidth}
		\includegraphics[width=\textwidth]{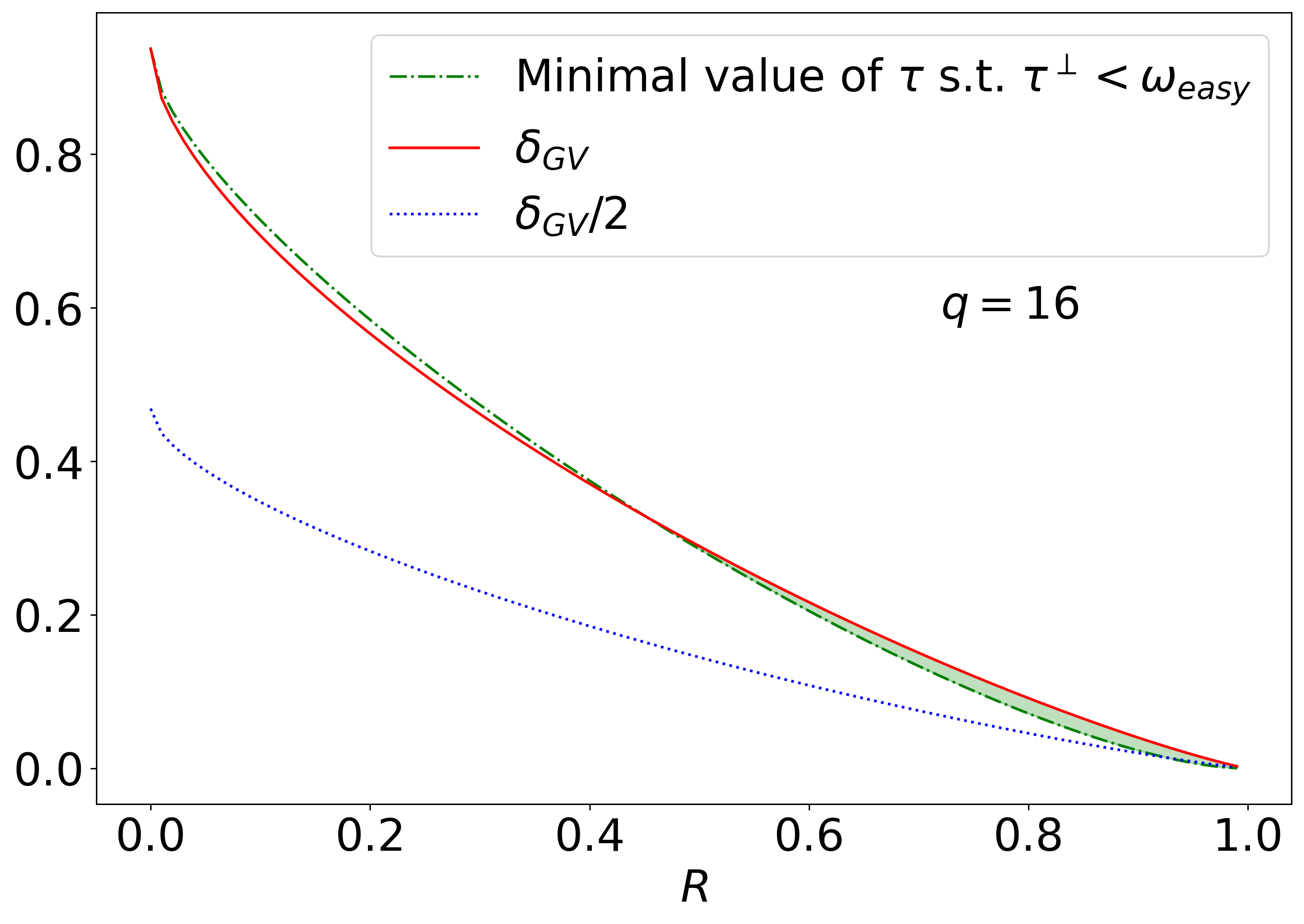}
	\end{subfigure}
	\caption{Range of values for $\tau$ as function of $R$.}
	\label{fig:RangeTau}
\end{figure}

A lower-bound on $\tau$ is obtained with the following arguments. First, if one wants to compute dual codewords (via the quantum measure) for which no poly-time algorithm is known, one has to ensure that $\tau^{\perp} < \omeasy(n,n-k) = \frac{q-1}{q} \; \frac{k}{n}$. But notice that $\tau \mapsto \tau^{\perp}$ is a decreasing involution on $\left[0,\frac{1}{2}\right]$.  Therefore, for the reduction to be meaningful, it is necessary that 
\begin{multline} \label{eq:lwBtau}
	\tau > \omeasy(n,n-k)^{\perp} = \frac{1}{q} \Rbra{\sqrt{(q-1)\Rbra{1-  \frac{q-1}{q} \; \frac{k}{n}}} - \sqrt{\frac{q-1}{q} \; \frac{k}{n}}}^{2} \\
	= \frac{q-1}{q^{2}} \Rbra{\sqrt{q- (q-1)\frac{k}{n}} - \sqrt{\frac{k}{n}}}^{2}
\end{multline} 
Furthermore, according to Theorem \ref{theo:Hamming}, the relative decoding distance $\tau$ has to verify 
\begin{equation*} 
	\tau < \frac{d_{\textup{GV}}(n,k)}{n} =  h_{q}^{-1}\Rbra{1-\frac{k}{n}} + \OO{\frac{1}{n}} = \drgv(n,k) + \OO{\frac{1}{n}}.
\end{equation*} 
Roughly speaking, it gives the tightest upper-bound for which we can expect to correctly decode with an overwhelming probability. Combining this with \eqref{eq:lwBtau} leads to a whole interval in which $\tau$ needs to lie for the reduction to work and be meaningful. 
In Figure \ref{fig:RangeTau}, we draw (asymptotically in $n$) this range of values of  $\tau$ as function of $R\eqdef \frac{k}{n}$ for different values of $q$ (the green area). 
Two important remarks can be made from these graphs:
\begin{enumerate}
	\item the range of interesting values for $\tau$ ({\em i.e.,} values such that we solve a hard instance of $\SCP$) shrinks as $q$ grows and depending on $R$;
	\item the lower bound on $\tau$ corresponding to $\tau^\perp < \omeasy(n,n-k)$ is almost always above $\drgv(n,k)/2$, meaning that in order to solve a hard instance of $\SCP$, it is a necessity that $\tau$ goes beyond the unique decoding radius.
\end{enumerate}

\subsection{Rank Case}
	
It turns out that unfortunately, for the rank metric (which is coarser than the Hamming metric), we always reduce decoding $t$ errors to finding dual codewords of weight $t^\perp = n-t$ where $t^\perp$ belongs to a range of values for which it is always easy to find codewords of this weight as we now show.
	
To verify this point, consider a linear code $\CC \subseteq \Fq^{m \times n}$ of dimension $K$ (with $m \geq n$). It is easy to find short codewords if they are above a certain range. To produce codewords of small weight, we use the fact that the dual code is a vector space of dimension $nm - K$. Thus, we can just produce codewords with $nm - K - 1$ entries equal to $0$ that will be good candidates for having a small weight by solving a linear system. The entries are chosen so as to fill columns with zeroes. It is straightforward that this strategy produces in polynomial time codewords of weight $\approx R n$ (since in our case $n \leq m$)
where $R$ is the rate of $\CC$ defined by $R \eqdef \frac{K}{mn}$.  

Notice now that $t^\perp$ is a decreasing function of the decoding distance $t$. The largest value for which we can hope to decode is the Gilbert-Varshamov distance $\dgv(m,n,K)$. The relative Gilbert-Varshamov distance $\drgv(m,n,K) \eqdef \frac{\dgv(m,n,K)}{n}$ satisfies the relation
\begin{equation*} 
	R = 1- \drgv(1+\nu-\drgv)
\end{equation*} 
where $\nu \eqdef  \frac{m}{n} \geq 1$. However, we have (where $\drgv^\perp$ is defined as $t^{\perp}/n$ when $t/n = \drgv(m,n,K)$)
\begin{equation*} 
	\frac{t^{\perp}}{n} \geq \drgv^\perp = 1 - \drgv = R + \drgv(\nu - \drgv) \geq R.
\end{equation*} 
In other words, we are always in a regime where finding codewords of relative weight $t^{\perp}/n$ is easy.
	 	\section{Concluding Remarks}

\par{\bf Considering other metrics.}
The whole approach we have followed here (properly choosing the error distribution and, if necessary, going beyond the unique decoding radius for decoding) can of course be adapted to other metrics than those we have considered here (Hamming and rank). For instance, it would be interesting to investigate it also for metrics like the Lee metric (more or less the $L_1$ norm version of the Euclidean metric on $\Z_q^n$)  which has also begun to find its way in code-based cryptography \cite{HW20} and should have a behavior closer to the Euclidean metric if the size of the alphabet grows with the code length.

\par{\bf Devising quantum algorithms for producing low weight codewords.}
Interestingly, the very same quantum reduction was recently used in \cite{CLZ22} to devise a polynomial-time algorithm for finding moderately low weight codewords for the $L_\infty$ norm in a regime of parameters for which no known polynomial time algorithms exist to perform this task. The code rate regime considered there (vanishing with the codelength) avoids the technicalities we needed here (namely truncating the error distribution). Again, this work emphasizes the role of the dual noise model obtained by applying the Fourier transform.

\par{\bf The dual error distribution.}
Moving to this quantum setting allows us to define in a natural way a dual error distribution. Indeed, consider the quantum state representing the noise added to the codeword
\begin{equation*}
	\ket{\pi} \eqdef \sumErrors \pi_{\ev} \ket{\ev}.
\end{equation*}
$(|\pi_{\ev}|^2)_{\ev \in \Fq^n} $ is a probability distribution of the error and its quantum Fourier transform $\QFT{\pi} \eqdef \sumErrors \widehat{\pi}_{\ev} \ket{\ev}$ gives a ``dual'' probability distribution of the noise $(\left|\widehat{\pi}_{\ev}\right|^2)_{\ev \in \Fq^n}$ which we view as the dual noise distribution. Note that a given noise distribution $(p_\ev)_{\ev \in \Fq^n}$ may be represented by several different quantum states $\sumErrors \pi_{\vec{e}} \ket{\ev}$ since it is sufficient that $p_\ev = |\pi_{\ev}|^2$ for all $\ev$. The choice of $\pi_\ev = \sqrt{p_{\ev}}$ is natural but not canonical and is used to define what we call the dual noise distribution. Interestingly enough, this dual noise distribution seems to capture very fundamental quantities in coding theory as we will explain in the following paragraph. Moreover, contrarily to the dual noise considered in \cite[\S IX]{F01}, which is also obtained by taking the Fourier transform and dualizing the sum-product algorithm and does not always yield a probability distribution when applied to a probability distribution, we define here a dual noise distribution that is always a probability distribution.

\par{\bf About the dual distance $t^\perp = \tau^\perp n$.}
This dual distribution also allows us to define a notion of ``dual distance''. We namely consider here probability distributions (such as the Bernoulli noise) that are concentrated around a certain weight $s$ and such that the dual distribution concentrates around another weight as well that can be considered as the ``dual weight'' $s^\perp$. In the case of the Bernoulli noise of parameter $\tau$, the dual distribution is again a Bernoulli noise distribution, of parameter $\tau^\perp$. In other words, the dual distance $t^\perp$ can be defined as $\tau^\perp n$ when $t = \tau n$. Note that the choice we made for defining the dual distribution, namely that $\pi_\ev = \sqrt{p_{\ev}}$ (where $p_\ev$ is the probability that $\ev$ is drawn from a Bernoulli distribution of parameter $\tau$), minimizes the weight $t^\perp$ around which the dual distribution converges. This gives the strongest reduction in our case.

Such an interpretation also holds in the lattice based setting (and the Euclidean distance) \cite{R05,SSTX09} when the noise is distributed over $\R^n$ according to a Gaussian distribution $D_{s}(\vec{x}) = \frac{e^{-\pi|\!|\vec{x}|\!|^{2}/s^{2}}}{s^n}$ concentrated around the distance $t = s\sqrt{\frac{n}{2\pi}}$. The dual noise is obtained by taking the Fourier transform of $\sqrt{D_s}$ and squaring the result. The dual noise is a centered Gaussian 
\begin{equation*}
t^\perp \eqdef \frac{1}{2s}\sqrt{\frac{n}{2\pi}} =\frac{n}{4 \pi t} 
\end{equation*}
where in the last equality we used that $t = s\sqrt{\frac{n}{2\pi}}$.

The same also holds in the case of the rank metric, where we define a quantum state representing the noise $\ket{\pi}$ whose rank weight concentrates around the weight $t$ we can decode. After applying the quantum Fourier transform, $\QFT{\pi}$ is a superposition of elements whose rank weight concentrates around $t^\perp \eqdef n-t$ (when we consider codes of length $n$). 
 
This notion of dual distance has an intriguing connection with a very fundamental and old issue which is still open for an overwhelming majority of metric spaces: what is the largest packing density for a given packing radius? The answer is generally not even known asymptotically. One of the most powerful technique which gives the best known bounds in the case of the Hamming metric or for the Euclidean metric over $\R^n$ is obtained through a linear programming approach \cite{D72,DL98}, see \cite{MRRW77,L79a}. Interestingly enough, the first linear programming bound in coding theory \cite{MRRW77} or the one \cite{L79a,CE03} for sphere packing in $\R^n$ can be be rephrased in terms of our dual distance and the Gilbert-Varshamov distance. Strictly speaking, the Gilbert-Varshamov is a coding theoretic notion, but it has also an analogue for $\R^n$, see \cite{DDRT22} for instance. For this purpose, let us express  the Gilbert-Varshamov distance as a function $\dgv(\delta)$ of the packing density $\delta \eqdef \frac{|\CC|}{|\EE|}$ where $|\CC|$ is the code size and $|\EE|$ the ambient space size. It is the supremum of the radius $t$ of a ball $B_t$  such that 
\begin{equation*}
	\abs{B_t} \leq \frac{1}{\delta}. 
\end{equation*}
The Gilbert-Varshamov distance for $\R^n$ and a given packing density $\delta$ can be defined similarly as the supremum of the radius $t$ of a ball $B_t$ such that  
\begin{equation*}
	\Vol(B_t) \leq \frac{1}{\delta}. 
\end{equation*}
With this notion, the first linear programming bound in coding theory \cite{MRRW77} or the one \cite{L79a} for sphere packing in $\R^n$ can be expressed as
\begin{equation}\label{eq:Hamming}
	t^\perp \geq  \dgv(\delta^*)(1+o(1)),
\end{equation}
where $\delta^*$ is the dual packing density: $\delta^* = \frac{1}{\delta}$ in the case of $\R^n$ and $\delta^* = \frac{1}{\delta |\EE|}$ in coding theoretic setting\footnote{This notion is related to the notion of dual lattice or dual code. Indeed in the code case, if the density $\frac{|\CC|}{|\EE|}$ of a linear code $\CC$ is $\delta$, the density of the dual code $\CC^\perp$ is $\delta^*$ since $\frac{|\CC^\perp|}{|\EE|}=\frac{|\CC^\perp|\cdot |\CC|}{|\EE|\cdot |\CC|}= \frac{|\EE|}{|\EE|\cdot |\CC|}=\frac{1}{\delta |\EE|}$. Similarly, for $\R^n$ if the density of a lattice $\Lambda$ is $\delta$ the density of the dual lattice $\Lambda^* \eqdef \{\yv \in \R^n: <\xv,\yv> \in \Z,\; \forall \xv \in \Lambda\}$ is $\frac{1}{\delta}=\delta^*$.}.
This indeed gives an upper-bound on $t$ since $t^\perp$ is a decreasing function of $t$. This suggests that there might be a direct proof of \eqref{eq:Hamming} relying only on the way $t^\perp$  is defined and could yield new bounds on the packing density/minimum distance of a code for the metrics where linear programming bounds are not known.
\newpage
 	\appendix
	\section{Proof of Theorem \ref{th:main}}\label{app:th:main}

\subsection{Step 1: Proof of Lemmas \ref{lem:dist_algo_ideal} and \ref{lem:Z}} \label{app:mainStep1}

Let us recall Lemma \ref{lem:dist_algo_ideal} first: \lemdistalgoideal* 

\begin{proof}
	Let $\Gc$ be the set of $(\cv,\ev)$'s that correspond to inputs of weight $t$ to $\Ac$ that are correctly decoded:
	\begin{equation*}
		\Gc  \eqdef \Cbra{(\cv,\ev) \in \CC \times \Sc_t \colon \Ac(\Gm,\cv+\ev) = \ev}. 
	\end{equation*} 
	Let us recall that
	\begin{equation*}
		\psiA = \frac{1}{\sqrt{q^k}} \sumErrors \sumCode \pi_{\ev} \ket{\ev - \Ac(\Gm,\cv+\ev)} \ket{\cv+\ev} \quad \text{and} \quad \psiideal = \frac{1}{\sqrt{Z}} \sumErrors \sumCode \pi_{\ev} \ket{\zerov_n} \ket{\cv+\ev}
	\end{equation*}
	From this we deduce by using the non-negativity of $\pi_\ev$ that
	\begin{eqnarray*}
		\braket{\psiAp}{\psiidealp} 
		& \geq & \frac{1}{\sqrt{q^k Z}} \sum_{(\cv,\ev) \in \Gc} \pi_{\ev}^2 \\
		& = & \sqrt{\frac{q^k}{Z}} S_t f(t)^2 \frac{\sharp\Gc}{q^k S_t} \\
		& = & \sqrt{\frac{q^k}{Z}} p_t \; \varepsilon_{\Gm}.
	\end{eqnarray*}
\end{proof}

\begin{remark}
	Here we do not have as in the lattice case \cite{SSTX09} to make the assumption that the decoder is ``strongly solution independent''. In our case we can indeed have a uniform superposition over all the codewords and we can just use the way our error probability is defined, namely as the ratio $ \frac{\sharp\Gc}{q^k S_t}$.
\end{remark}

All the probabilistic results of this section are easier to prove if, instead of choosing a code $\CC$ by picking uniformly at random a generator matrix $\Gm$ for it, we slightly change the probabilistic model by picking uniformly at random a parity-check matrix $\Hm \in \Fq^{(n-k) \times n}$ for it, $\ie$, 
\begin{equation*}
	\CC =\Cbra{\xv \in \Fq^n \colon \Hm \transpose{\xv} = \zerov}.
\end{equation*}
We will denote $\Prob_{\Gm}$ and $\Prob_{\Hm}$ respectively the  probabilities in the initial model and the probabilities in the new model. The two probability distributions are closely related: the first model always produces  linear codes of dimension $\leq k$ and codes of dimension $=k$ with probability $1-\OO{q^{-(n-k)}}$ whereas the second model always produces linear codes of dimension $\geq k$ and codes of dimension $=k$ with probability $1-\OO{q^{-k}}$. This relationship is expressed by the following lemma.

\begin{lemma}\label{lem:GvsH}
	Let $\Ec$ be an ensemble of linear codes of length $n$ in $\Fq$. We have
	\begin{equation*}
		\Prob_{\Gm}(\Ec) \leq \Prob_{\Hm}(\Ec) + \OO{q^{-\min(k,n-k)}}.
	\end{equation*}
\end{lemma}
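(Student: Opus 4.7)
The plan is to condition both probability models on the event that the sampled matrix has full row rank, observe that in that case the two distributions coincide (they are both uniform on $k$-dimensional codes), and absorb the failure probabilities into the error term.

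More precisely, let $F_{\Gm}$ denote the event ``$\Gm$ has full row rank $k$'' and $F_{\Hm}$ the event ``$\Hm$ has full row rank $n-k$''. A standard counting argument on $\Fq^{k\times n}$ gives
\[
\Prob(F_{\Gm}^{c}) \;=\; 1-\prod_{i=0}^{k-1}\frac{q^{n}-q^{i}}{q^{n}} \;=\; \OO{q^{-(n-k)}},
\]
and symmetrically $\Prob(F_{\Hm}^{c}) = \OO{q^{-k}}$. Under $F_{\Gm}$ the code $\CC = \{\uv \Gm : \uv \in \Fq^{k}\}$ is $k$-dimensional, and under $F_{\Hm}$ the code $\CC = \{\xv : \Hm\transpose{\xv}=\zeroo\}$ is also $k$-dimensional.

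The key observation is that, conditioned on $F_{\Gm}$, the resulting code is uniformly distributed over the set $\Xc_{k}$ of $k$-dimensional subspaces of $\Fq^{n}$: indeed, every fixed $k$-dimensional code $\CC$ is generated by exactly $|GL_{k}(\Fq)| = \prod_{i=0}^{k-1}(q^{k}-q^{i})$ full-rank generator matrices, a number that does not depend on $\CC$. The same counting argument applied on the parity-check side (every $k$-dimensional code has exactly $|GL_{n-k}(\Fq)|$ full-rank parity-check matrices) shows that, conditioned on $F_{\Hm}$, the code is also uniformly distributed over $\Xc_{k}$. Let $\mu$ denote this common uniform distribution on $\Xc_{k}$, and set $\Ec_{k} \eqdef \Ec \cap \Xc_{k}$.

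The proof then follows by two chained estimates. On the $\Gm$ side,
\[
\Prob_{\Gm}(\Ec) \;\leq\; \Prob_{\Gm}(\Ec \mid F_{\Gm}) + \Prob(F_{\Gm}^{c}) \;=\; \mu(\Ec_{k}) + \OO{q^{-(n-k)}}.
\]
On the $\Hm$ side,
\[
\Prob_{\Hm}(\Ec) \;\geq\; \Prob_{\Hm}(\Ec \cap F_{\Hm}) \;=\; \mu(\Ec_{k})\,\Prob(F_{\Hm}) \;\geq\; \mu(\Ec_{k}) - \OO{q^{-k}}.
\]
Combining the two inequalities yields $\Prob_{\Gm}(\Ec) \leq \Prob_{\Hm}(\Ec) + \OO{q^{-k}} + \OO{q^{-(n-k)}} = \Prob_{\Hm}(\Ec) + \OO{q^{-\min(k,n-k)}}$, as wanted.

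The only non-routine step is the uniformity claim, but it is a clean orbit-counting argument: $GL_{k}(\Fq)$ acts freely on the full-rank generator matrices by left multiplication, with orbits exactly the set of matrices generating a given code. I do not foresee any real obstacle; the bookkeeping of the $\OO{q^{-(n-k)}}$ and $\OO{q^{-k}}$ losses is the only place where care is needed to reach $\OO{q^{-\min(k,n-k)}}$.
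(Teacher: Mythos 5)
Your proof is correct and follows exactly the route the paper sketches in the paragraph preceding the lemma (the paper leaves the lemma unproved, simply noting that both models produce a $k$-dimensional code except with probability $\OO{q^{-(n-k)}}$ and $\OO{q^{-k}}$ respectively): conditioning on full rank on each side, using the orbit-counting argument to identify both conditional laws with the uniform distribution on $k$-dimensional subspaces, and absorbing the two failure probabilities into the $\OO{q^{-\min(k,n-k)}}$ term.
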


With this new probabilistic model, the expected value of $Z$ is given by: \lemEZ*

\begin{proof}
	Computing $\esp_{\Hm}(Z)$ with this alternate probabilistic model is straightforward. We have
	\begin{eqnarray*}
		Z & = & \norm{\sum_{\cv \in \CC, \ev \in \Fq^n} \pi_{\ev} \ket{\zerov_n} \ket{\cv+\ev}}^2 \\
		& = & q^k \sumErrors \abs{\pi_{\ev}}^2 + \sum_{{\substack{(\cv,\ev) \neq (\cv',\ev') \colon \\ \cv+\ev=\cv'+\ev'}}} \pi_{\ev} \pi_{\ev'} \\
		& = & q^k \Rbra{1+ \sum_{\ev \neq \ev' \colon \Hm \transpose{\Rbra{\ev-\ev'}}= \zerov} \pi_{\ev} \pi_{\ev'}}
	\end{eqnarray*}
	where $\Hm$ is an arbitrary-parity check matrix for $\CC$. Let 
	\begin{equation*}
		X \eqdef \sum_{\ev \neq \ev' \colon \Hm \transpose{\Rbra{\ev-\ev'}}= \zerov} \pi_{\ev} \pi_{\ev'}.
	\end{equation*}
	The point of the probabilistic model where the parity-check matrix $\Hm$ is uniformly drawn at random is that for non-zero element $\xv \in \Fq^n$ we have
	\begin{equation*}
		\Prob_{\Hm}(\xv \in \CC) = \Prob_{\Hm}(\Hm \transpose{\xv} = \zerov) = \frac{1}{q^{n-k}}.
	\end{equation*}
	From this we deduce
	\begin{eqnarray*}
		\esp_{\Hm}(X) & = & \sum_{\ev \neq \ev'} \pi_{\ev} \pi_{\ev'} \Prob_{\Hm}((\ev - \ev') \in \CC) \\
		& = & \sum_{\ev \neq \ev'} \frac{\pi_{\ev} \pi_{\ev'}}{q^{n-k}} \\
		& \le & \sum_{\ev, \ev'} \frac{\pi_{\ev} \pi_{\ev'}}{q^{n-k}} \\
		& = & \frac{\piun}{q^{n-k}}.
	\end{eqnarray*}
	From this inequality we conclude the proof.
\end{proof}

With the help of these two lemmas, we can upper-bound the probability for $Z$ to be bigger than $q^k (1+\eta)$ for any $\eta > 0$ and prove Lemma \ref{lem:Z}, that we recall: \lemZ*

\begin{proof}
	We take back the notation from the proof of Lemma \ref{lem:EZ}. We have
	\begin{eqnarray*}
		\Prob_{\Gm}(Z > q^k (1+\eta)) & = & \Prob_{\Gm}(X > \eta) \\
		& \leq & \Prob_{\Hm}(X > \eta) + \OO{q^{-\min(k,n-k)}} \quad \text{(by Lemma \ref{lem:GvsH})} \\
		& \leq & \frac{1}{\eta}\; \esp_{\Hm}(X) + \OO{q^{-\min(k,n-k)}} \quad \text{(Markov inequality)} \\
		& \leq & \frac{1}{\eta} \; \frac{\piun}{q^{n-k}} + \OO{q^{-\min(k,n-k)}}
	\end{eqnarray*}
	which concludes the proof. 
\end{proof}

\subsection{Step 2: Proof of Lemma \ref{lem:measure}} \label{app:mainStep2}

If we apply a QFT on the second register of $\psiideal$ (given in Equation \eqref{eq:psiIdeal}), we obtain:
\begin{equation*}
	\psiidealQFT \eqdef \Rbra{\Im \otimes \textup{{QFT}} \otimes \Im} \psiideal = \frac{q^k}{\sqrt{Z}} \sumDual \reallywidehat{\pi}_{\cv^\perp} \ket{\zerov_n} \ket{\cv^\perp},
\end{equation*}
where $\QFT{\pi} = \sumErrors \widehat{\pi}_{\ev} \ket{\ev}$ is the QFT of $\ket{\pi}$. We use this remark to prove Lemma \ref{lem:measure}, that we recall: \lemmeasure* 

\begin{proof}
	For $\ev \in \Fq^n$, let 
	\begin{equation*}
		\ket{\unv_{\CC+\ev}} \eqdef \sumCode \ket{\cv+\ev}.
	\end{equation*}
	We have 
	\begin{eqnarray}
		\QFT{\unv_{\CC+\ev}} & = & \sumCode \frac{1}{\sqrt{q^n}} \sumFqn{y} \chi_{\yv}(\cv+\ev) \ket{\yv} \nonumber \\
		& = & \frac{1}{\sqrt{q^n}} \sumFqn{y} \chi_{\yv}(\ev) \sumCode \chi_{\yv}(\cv) \ket{\yv} \nonumber \\
		& = & \frac{q^k}{\sqrt{q^n}} \sumDual \chi_{\cv^\perp}(\ev) \ket{\cv^\perp} \quad \text{(since $\sumCode \chi_{\yv}(\cv) = 0$ if $\yv \notin \CC^\perp$ and $q^k$ otherwise)}\label{eq:QFT_coset}
	\end{eqnarray}
	Therefore
	\begin{eqnarray*}
		\QFT{\mathop{\sum}\limits_{\ev \in \Fq^n, \cv \in \CC} \pi_{\ev} \ket{\cv+\ev}} & = & \sumErrors \pi_{\ev} \QFT{\unv_{\CC+\ev}} \\
		& = & \frac{q^k}{\sqrt{q^n}} \sumErrors \pi_{\ev} \sumDual \chi_{\cv^\perp}(\ev) \ket{\cv^\perp} \\
		& = & q^k \sumDual \frac{1}{\sqrt{q^n}} \sumErrors \pi_{\ev} \chi_{\cv^\perp}(\ev) \ket{\cv^\perp} \\
		& = & q^k \sumDual \reallywidehat{\pi}_{\cv^\perp} \ket{\cv^\perp}.
	\end{eqnarray*}
	It follows that
	\begin{eqnarray*}
		\psiidealQFT & = & \frac{q^k}{\sqrt{Z}} \sumDual \reallywidehat{\pi}_{\cv^\perp} \ket{\zerov_n} \ket{\cv^\perp}.
	\end{eqnarray*}
	After measurement we get a state $\ket{\zerov_n} \ket{\cv^\perp}$ with probability $\frac{q^{2k}}{Z} \abs{\fperp(\abs{\cv^\perp})}^{2}$. By summing over all $\cv^\perp \in \CC^{\perp}$ of weight $u$, we conclude the proof.
\end{proof}

\subsection{Step 3 : Proof of Proposition \ref{prop:step2}} \label{app:mainStep3}

We first need for this a good estimation of $\psiidealQFT$'s amplitudes. This will be a consequence of the following lemma. 

\begin{lemma}\label{lem:Nperp}
	If the generator matrix $\Gm$ of a code $\CC$ is chosen uniformly at random in $\Fq^{k \times n}$ then the number $N^\perp_u$ of codewords of weight $u$ in $\CC^\perp$ satisfies
	\begin{equation*}
		\Prob_{\Gm}\Rbra{\abs{N^\perp_u - \frac{S_u}{q^k}} \geq  \Rbra{\frac{S_u}{q^k}}^{3/4}} \leq (q-1) \; \sqrt{\frac{q^k}{S_u}}.
	\end{equation*}
\end{lemma}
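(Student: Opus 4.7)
The plan is to use a standard second-moment/Chebyshev argument based on the probabilistic model where $\Gm$ is uniform. Note that $\Gm$ acts as a parity-check matrix for $\CC^\perp$, so $\yv \in \CC^\perp$ iff $\Gm \transpose{\yv} = \zeroo$.

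First, I would compute $\esp[N^\perp_u]$ by writing
\[
N^\perp_u = \sum_{\yv \in \Fq^n,\; |\yv|=u} \und[\yv \in \CC^\perp].
\]
Since the rows of $\Gm$ are i.i.d.\ uniform in $\Fq^n$, for any fixed nonzero $\yv$ each row $\gv_i$ satisfies $\gv_i \cdot \yv = 0$ with probability $1/q$, independently across $i$. Hence $\Prob(\yv \in \CC^\perp) = 1/q^k$ and linearity of expectation gives $\esp[N^\perp_u] = S_u / q^k$.

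Second, I would estimate the variance by expanding
\[
\esp[(N^\perp_u)^2] = \sum_{\substack{\yv,\yv' \in \Fq^n \\ |\yv|=|\yv'|=u}} \Prob(\yv,\yv' \in \CC^\perp).
\]
Split the sum according to whether $\yv$ and $\yv'$ are $\Fq$-linearly independent. If they are independent, the conditions $\gv_i\cdot\yv = 0$ and $\gv_i\cdot\yv'=0$ are two independent constraints per row, giving joint probability $1/q^{2k}$. If $\yv' = \alpha\yv$ for some $\alpha\in\Fq^\ast$, then both events coincide and the joint probability is $1/q^k$; since scalar multiplication preserves Hamming weight, the number of such dependent pairs with both vectors of weight $u$ is exactly $(q-1)S_u$. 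Hence
\[
\esp[(N^\perp_u)^2] \leq \frac{S_u^2}{q^{2k}} + \frac{(q-1)S_u}{q^k},
\]
so $\Var(N^\perp_u) \leq (q-1) S_u / q^k$.

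Third, I would apply Chebyshev's inequality with deviation $a = (S_u/q^k)^{3/4}$:
\[
\Prob\!\left(\left| N^\perp_u - \frac{S_u}{q^k} \right| \geq \left(\frac{S_u}{q^k}\right)^{3/4}\right) \leq \frac{\Var(N^\perp_u)}{a^2} \leq \frac{(q-1)S_u/q^k}{(S_u/q^k)^{3/2}} = (q-1)\sqrt{\frac{q^k}{S_u}},
\]
which is the desired bound.

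There is no real obstacle here: this is a routine first/second moment computation. The only subtlety worth flagging is the careful enumeration of linearly dependent pairs of weight $u$ (they form $(q-1)S_u$ pairs, all contributing joint probability $1/q^k$ rather than $1/q^{2k}$), since getting this count wrong would change the variance bound and thus the final constant.
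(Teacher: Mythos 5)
Your proof is correct and takes essentially the same route as the paper: both compute $\esp[N^\perp_u]=S_u/q^k$, bound the variance by $(q-1)S_u/q^k$ using the fact that linearly dependent pairs of weight-$u$ vectors have joint membership probability $1/q^k$ (versus $1/q^{2k}$ for independent pairs), and then apply Chebyshev with $a=(S_u/q^k)^{3/4}$. The only cosmetic difference is that you expand $\esp[(N^\perp_u)^2]$ directly whereas the paper decomposes the variance into diagonal plus covariance terms, but the counting and the resulting bound are identical.
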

\begin{proof}\label{lemma:BT}
	Let $\und_{\xv}$ be the indicator function of the event ``$\xv \in \CC^\perp$''. By definition,
	\begin{equation}
		N_u^\perp = \sum_{\xv \in \Sc_u} \und_{\xv}.
	\end{equation} 
	We have $\esp(\und_{\xv}) = \Prob_{\Gm}(\xv \in \CC^\perp) = \frac{1}{q^k}$, implying that $\esp(N_u^\perp) = \frac{S_u}{q^k}$. 
	By using Bienaym\'{e}-Tchebychev’s inequality, we obtain:
	\begin{align}
		\Prob\Rbra{\abs{N^\perp_u -\frac {S_u}{q^k}} \geq a} & \leq \frac{\Var(N_u^\perp)}{a^2} \nonumber \\
		&= \frac{1}{a^2} \Rbra{\sum_{\xv \in \Sc_u} \Var(\und_{\xv}) + \sum_{\substack{\xv, \yv \in \Sc_u \\ \xv \neq \yv}} \esp(\und_{\xv}\und_{\yv}) - \esp(\und_{\xv})\esp(\und_{\yv})} \nonumber \\
		& \leq \frac{1}{a^2} \Rbra{\sum_{\xv \in \Sc_u} \esp(\und_{\xv}) + \sum_{\substack{\xv, \yv \in \Sc_u \\ \xv \neq \yv}} \esp(\und_{\xv}\und_{\yv}) - \esp(\und_{\xv})\esp(\und_{\yv})} \nonumber \\
		&= \frac{1}{a^{2}} \Rbra{\frac{S_u}{q^k} + \sum_{\substack{\xv, \yv \in \Sc_u \\ \xv \neq \yv}} \esp(\und_{\xv}\und_{\yv}) - \esp(\und_{\xv})\esp(\und_{\yv})} \label{eq:varD}
	\end{align}
	where we used that $\Var(\und_{\xv}) \leq \esp(\und_{\xv}^{2}) = \esp(\und_{\xv})$. Let us now upper-bound the second term of the inequality. It is readily verified that:
	\begin{equation*} 
		\esp(\und_{\xv}\und_{\yv}) = \left\{
		\begin{array}{ll}
			\nicefrac{1}{q^{k}} & \mbox{ if } \xv \mbox{ and } \yv \mbox{ are colinear}, \\
			\nicefrac{1}{q^{2k}} & \mbox{ otherwise.}
		\end{array}
		\right.
	\end{equation*} 	
	Therefore, we deduce that:
	\begin{align}
		\sum_{\substack{\xv, \yv \in \Sc_u \\ \xv \neq \yv }} \esp(\und_{\xv}\und_{\yv}) - \esp(\und_{\xv})\esp(\und_{\yv}) & = \sum_{\xv \in \Sc_u} \sum_{\substack{\yv \in \Sc_u\setminus\xv \colon \\ \text{ colinear to } \xv }} \frac{1}{q^k} - \frac{1}{q^{2k}} \nonumber\\
		&\leq \sum_{\xv \in \Sc_u} \sum_{\substack{\yv \in \Sc_u\setminus\xv \colon \\ \text{ colinear to } \xv}} \frac{1}{q^k} \nonumber \\
		&\leq  \frac{(q-2) S_u}{q^k} \label{eq:finVar}
	\end{align}
	It gives by plugging \eqref{eq:finVar} in \eqref{eq:varD}:
	\begin{equation*}
		\Prob_{\Gm}\Rbra{\abs{N^\perp_u -\frac {S_u}{q^k}} \geq a} \; \leq \; \frac{1}{a^2} \Rbra{\frac{S_u}{q^k} + \frac{(q-2)S_u}{q^k}} \; = \; \frac{(q-1)S_u}{a^2 q^k}
	\end{equation*} 
	which concludes the proof by choosing $a = \Rbra{\frac{S_u}{q^k}}^{3/4}$. 
\end{proof}

We are ready to prove Proposition \ref{prop:step2} which we now recall: \propstepthree*

\begin{proof} 
	Let $\mathcal{Q}$ be the quantum algorithm starting from $\ket{\psi}$ which computes $(\Im \otimes \textup{{QFT}} \otimes \Im)\ket{\psi}$. This algorithm succeeds when measuring a dual codeword $\cv^{\perp}\in \CC^{\perp}$ of weight $u \in \Wc$. When starting with $\psiideal$, the probability of success of $\mathcal{Q}$ is equal to $ \sum_{u \in \Wc} \frac{q^{2k} N^\perp_u}{Z} \abs{\fperp(u)}^2$ by Lemma \ref{lem:measure}. Let
	\begin{eqnarray*}
		\Bc & \eqdef & \Cbra{\Gm \in \Fq^{k \times n} \colon Z > q^k \Rbra{1 + \sqrt{\frac{\braket{\pi}{\unv}^2}{q^{n-k}}}}} \quad \text{ and } \\
		\Ec_u & \eqdef & \Cbra{\Gm \in \Fq^{k \times n} \colon \abs{N^\perp_{u} - \frac{S_u}{q^k}} \geq \Rbra{\frac{S_u}{q^k}}^{3/4}}.
	\end{eqnarray*}
	By Lemmas \ref{lem:Z} and \ref{lem:Nperp} we have that 
	\begin{equation*}
		\Prob\Rbra{\Gm \in \Bc \cup \mathop{\bigcup}\limits_{u \in \Wc} \Ec_u} \leq \beta(\pi) = (q-1)\sum\limits_{u \in \Wc} \sqrt{\frac{q^k}{S_u}} + \sqrt{\frac{\braket{\pi}{\unv}^2}{q^{n-k}}} +\OO{q^{-\min(k,n-k)}}.
	\end{equation*} 
	Therefore, for a proportion $\geq 1 - \beta(\pi)$ of codes (over matrices $\Gm$):
	\begin{enumerate}
		\item[$(i)$] $Z \leq q^k \Rbra{1+ \sqrt{\frac{\braket{\pi}{\unv}^2}{q^{n-k}}}}$ and $Z \geq q^k$ (this is true for any $\Gm$ as $\pi_{\ev} \geq 0$ for any $\ev$),
		\item[$(ii)$] for all $u$ in $\Wc$, $\abs{\frac{q^k N_u^\perp}{S_u} - 1} \leq \Rbra{\frac{q^k}{S_u}}^{1/4}$. 
	\end{enumerate}
	We deduce that for a proportion $\geq 1 - \beta(\pi)$ of codes and for all $u$ in $\Wc$:
	\begin{equation*}
		\frac{1 - \Rbra{\frac{q^k}{S_u}}^{1/4}}{1+\sqrt{\frac{\braket{\pi}{\unv}^2}{q^{n-k}}}} \leq \frac{q^k N_u^\perp}{S_u} \; \frac{q^k}{Z} \leq 1 + \Rbra{\frac{q^k}{S_u}}^{1/4}.
	\end{equation*}
	This implies that for a proportion $\geq 1 - \beta(\pi)$ of codes and for all $u$ in $\Wc$ we have that
	\begin{equation*}
		\frac{1 - \sum_{u \in \Wc} \Rbra{\frac{q^k}{S_u}}^{1/4}}{1 + \sqrt{\frac{\braket{\pi}{\unv}^2}{q^{n-k}}}} \leq \frac{q^{2k} N_u^\perp}{S_u Z} \leq 1 + \sum_{u \in \Wc} \Rbra{\frac{q^k}{S_u}}^{1/4},
	\end{equation*}
	from which we deduce that under the same conditions we also have
	\begin{multline}\label{eq:encProb}
		S_u \abs{\fperp(u)}^{2} \; \Rbra{\frac{1 - \sum_{u \in \Wc} \Rbra{\frac{q^k}{S_u}}^{1/4}}{1 + \sqrt{\frac{\braket{\pi}{\unv}^2}{q^{n-k}}}}} \\ 
		\leq \sum_{u \in \Wc} \frac{q^{2k} N^\perp_u}{Z} \abs{\fperp(u)}^2 \leq S_u \abs{\fperp(u)}^2 \Rbra{1 + \sum_{u \in \Wc} \Rbra{\frac{q^k}{S_u}}^{1/4}}. 
	\end{multline}
	Now, 
	\begin{equation*}
		\frac{1 - \sum_{u \in \Wc} \Rbra{\frac{q^k}{S_u}}^{1/4}}{1+\sqrt{\frac{\braket{\pi}{\unv}^2}{q^{n-k}}}} \geq 1 - \sum_{u \in \Wc} \Rbra{\frac{q^k}{S_u}}^{1/4} - \sqrt{\frac{\braket{\pi}{\unv}^2}{q^{n-k}}}
	\end{equation*}
	Therefore, by plugging this in Equation \eqref{eq:encProb} we have for a proportion $\geq 1- \beta(\pi)$ of codes:
	\begin{equation}
		(1-\delta) \sum_{u \in \Wc} S_u \abs{\fperp(u)}^{2} \leq 	\sum_{u \in \Wc} \frac{q^{2k} N^\perp_u}{Z} \abs{\fperp(u)}^2 \leq (1+\delta) \sum_{u \in \Wc} S_u \abs{\fperp(u)}^{2}
	\end{equation}
	where $\delta \eqdef \sum_{u \in \Wc} \Rbra{\frac{q^k}{S_u}}^{1/4} + \sqrt{\frac{\braket{\pi}{\unv}^2}{q^{n-k}}}$. 
We finish the proof by applying Lemma \ref{lem:measure}: we namely know that after measuring the state obtained by $\mathcal{Q}$, we obtain a dual codeword of weight $u$ in $\Wc$ with probability $\sum_{u \in \Wc} \frac{q^{2k} N^\perp_u}{Z} \abs{\fperp(u)}^2$.
\end{proof}

\section{Proof of Theorem \ref{theo:rank}}\label{app:theo:rank}

\subsection{Step 1 (choice of $\ket{\pi}$): Proofs of Lemmas \ref{lemma:F} and \ref{lemma:NRank}} \label{app:rankStep1}

The following lemma will be very helpful in what follows.

\begin{lemma}\cite[\S9.3, Lem. 9.3.2]{BCN89} \label{lemma:Card} 
	Let $V$ be a subspace of dimension $s$, then there are exactly $q^{(t-\ell)(s-\ell)} \gauss{n-s}{t-\ell}_q \gauss{s}{\ell}_q$ subspaces $W$ of dimension $t$ such that $\dim(V \cap W) = \ell$. 
\end{lemma}

Let us now recall Lemma \ref{lemma:F}: \lemmaF*

\begin{proof}
Let $U$ be the $\Fq$-space generated by the $\Em_{i}$'s. We denote by $u$ the dimension of $U$. We have
	\begin{equation*}
		\pi_{\Em} = \frac{1}{\sqrt{q^{mt} N}} \; \sharp \Cbra{V \le \Fq^n \colon \dim V = t \mbox{ and } U \subseteq  V } = \frac{\gauss{n-u}{t-u}_{q}}{\sqrt{q^{mt}N}},
	\end{equation*}
	where we used Lemma \ref{lemma:Card} for the last equality. It concludes the proof.
\end{proof}	

Another asymptotic expression for $N$ and an estimate for $p_t$ are given by: \lemmaNRank*

This lemma will be a consequence of the following lemmas. 	

\begin{restatable}{lemma}{lembraketpivpiw}\label{lem:braketpivpiw}
	For any $V,W \le \Fq^n$ such that $V \neq W$ and $\dim V = \dim W = t$ we have,
	\begin{equation*}
		\braket{\pi_V}{\pi_W} = q^{m \Rbra{\dim(V \cap W)-t}}.
	\end{equation*}
\end{restatable}
		
\begin{proof}
	Recall that,
	\begin{equation*}
		\ket{\pi_U} = \Rbra{\frac{1}{\sqrt{q^{\dim U}}} \sum_{\uv \in U} \ket{\uv}}^{\otimes m}
	\end{equation*}
	Therefore we have,
	\begin{equation*}
		\braket{\pi_V}{\pi_W} = \Rbra{\frac{1}{q^t} \sum_{\vv \in V} \sum_{\wv \in W} \braket{\vv}{\wv}}^m = \Rbra{\frac{1}{q^t} \sharp (V \cap W)}^m 
	\end{equation*}
which concludes the proof. 
\end{proof}

\begin{lemma}\label{lemma:N0}
	We have,
	\begin{equation*}
		\sumV \sumW \braket{\pi_V}{\pi_W} = \OO{\gauss{n}{t}_q}.
	\end{equation*}
\end{lemma}

\begin{proof}
We have
	\begin{align}
		\sumV \sumW \braket{\pi_V}{\pi_W} & = \sumV \sumW  q^{m \Rbra{\dim(V \cap W) - t}} \nonumber \\
		& = \sumV \sum_{\ell=0}^{t-1} \sum_{\substack{W \le \Fq^n \\ \dim W = t \\ \dim(W \cap V) = \ell}} \frac{1}{q^{m(t-\ell)}} \nonumber \\
		& = \sumV \sum_{\ell=0}^{t-1} \frac{1}{q^{m(t-\ell)}} \; q^{(t-\ell)^2} \gauss{t}{\ell}_q \gauss{n-t}{t-\ell}_q \label{eq:lemmaN0} \\
		& = \gauss{n}{t}_q \sum_{\ell=0}^{t-1} q^{(t-\ell-m)(t-\ell)} \gauss{t}{\ell}_q \gauss{n-t}{t-\ell}_q \nonumber
	\end{align}
	where in Equation \eqref{eq:lemmaN0} we used Lemma \ref{lemma:Card}. Now, there exists some constant $c>0$ such that:
	\begin{equation*}
		\gauss{n}{\ell}_q \le c q^{\ell(n-\ell)}.
	\end{equation*}
	Then, for some constant $C > 0$,
	\begin{align}
		\sumV \sumW \braket{\pi_V}{\pi_W}
		& \le C \gauss{n}{t}_q \sum_{\ell=0}^{t-1} q^{(t-\ell-m)(t-\ell) + \ell(t-\ell) + (t-\ell)(n-2t+\ell)} \nonumber \\
		& = C \gauss{n}{t}_q \sum_{\ell=0}^{t-1} q^{(t-\ell)(t-\ell - m + \ell + n-2t+\ell)} \nonumber \\
		& = C \gauss{n}{t}_q \sum_{\ell=0}^{t-1} q^{(t-\ell)(-t + \ell  - m + n)} \nonumber \\
		&\leq C \gauss{n}{t}_q \sum_{\ell=0}^{t-1} q^{-(t-\ell)^{2}} \quad (\text{since } n \leq m) \nonumber
	\end{align}
	which concludes the proof.
\end{proof}
		
We are now ready to prove Lemma \ref{lemma:NRank}.

\begin{proof}[Proof of Lemma \ref{lemma:NRank}] 
	By definition of $N$ we have:
	\begin{equation*}
		N = \norm{\sumV \pi_V }^{2} = \sumV \norm{\pi_V }^{2} + \sumV \sumW \Abra{\pi_{V},\pi_{W}}
	\end{equation*}
	and by definition of $\pi_V$:
	\begin{equation*}
		\sumV \norm{\pi_V }^{2} = \gauss{n}{t}_{q} \Rbra{\frac{1}{q^t} \sum_{v \in V} 1}^m = \gauss{n}{t}_{q}
	\end{equation*}
	This concludes the proof that $N = \gauss{n}{t}_{q}$ by using Lemma \ref{lemma:N0}. Now, by definition of $p_t$, we have:
	\begin{align*}
		p_t & = S_t f(t)^2 \\
			& = S_t \frac{\gauss{n-t}{0}_q^2}{q^{mt}N} \quad \text{(by Lemma \ref{lemma:F})} \\
			& = \frac{S_t}{\Th{S_t}} \quad \text{(by using the estimate for $N$ and Equation \eqref{eq:asymptRank2})} \\
	\end{align*}
	allowing us to conclude that $p_t = \Th{1}$.
\end{proof}

\subsection{Step 3: Proofs of Lemmas \ref{lemma:GVRank} and \ref{lemma:NegliEtPolyRank}} \label{app:rankStep3}

Recall Lemma \ref{lemma:GVRank} first: \lemmaGVRank*

\begin{proof}
	In order to prove the first equation, note that $\dgv(m,n,k)$ is defined such that $\frac{S_{\dgv}}{q^{mn-k}} \leq 1$. From this and Equation \eqref{eq:ratio} we deduce
	\begin{equation*}
		\frac{S_{\dgv-1}}{q^{mn-k}} = \frac{S_{\dgv-1}}{S_{\dgv}} \frac{S_{\dgv}}{q^{mn-k}} \leq \Th{q^{-(m+n-2 \dgv-1)}} = \Th{q^{-\Om{n}}}
	\end{equation*}
	where the last equality follows from the fact that (see for instance \cite{L06})
	\begin{equation}\label{eq:GVRank} 
		\dgv(n,m,k)  = \frac{m +n -\sqrt{(m-n)^2 + 4k}}{2}(1+o(1)).
	\end{equation}
	Now, for the second equation, we have
	\begin{align*}
		\frac{\braket{\pi}{\unv}^2}{q^{mn}} &= \abs{\fperp(0)}^2 \\
		&= \frac{\gauss{n}{n-t}_q^2}{q^{m(n-t)}N} \quad \mbox{(by Lemma \ref{lemma:piRank})} \\
		&= \Th{\frac{q^{mt}\gauss{n}{t}_{q}}{q^{mn}}} \quad \mbox{(By Lemma \eqref{lemma:NRank})} \\
		&= \Th{\frac{S_t}{q^{mn}}} \quad \mbox{(By Equations \eqref{eq:asymptRank1} and \eqref{eq:asymptRank2})}
	\end{align*}
	Thanks to the first equation, we complete the proof.
\end{proof}

The second lemma we need is recalled here: \lemmaTermeNegliEtPolyRank*

\begin{proof}
	The first equation can be proved in the same way as Lemma \ref{lemma:GVRank}. For the second identity, first notice (by Lemma \ref{lemma:piRank}) that
	\begin{equation*}
		\sum_{u \in \ints{n-t-\eta n, n-t}} S_u \abs{\fperp(u)}^2 = 1- \sum_{u < n-t-\eta n} S_u \abs{\fperp(u)}^2.
	\end{equation*}
	Now, we have the following computation:
	\begin{align*}
		\sum_{u < n-t-\eta n} S_u \abs{\fperp(u)}^{2} & = \sum_{u < n-t-\eta n} S_u \; \frac{\gauss{n-u}{n-u-t}_q^2}{N \; q^{m(n-t)}} \quad \mbox{(by Lemma \ref{lemma:GVRank})} \\
		&= \Th{\sum_{u < n-t-\eta n} S_u \; \frac{\gauss{n-u}{n-u-t}_q^2}{\gauss{n}{t}_{q} \; q^{m(n-t)}}} \quad \mbox{(by Lemma \ref{lemma:NRank})} \\
		&= \Th{\sum_{u < n-t-\eta n} q^{u(m+n-u)} \; \frac{q^{2(n-t-u)t}}{q^{t(n-t)} \; q^{m(n-t)}}}  \quad \mbox{(By Equations \eqref{eq:asymptRank1} and \eqref{eq:asymptRank2})} \\
		&= \Th{q^{\max_{u < n-t-\eta n}{u(m+n-u-2t)}} \; q^{(t-m)(n-t)}}
	\end{align*}
	Let $g(u) \eqdef u(m+n-u-2t)$. Then, $g'(u) = m + n - 2(t+u) \geq 0$ as $t+u \leq n \leq m$. Therefore, $g$ is an increasing function and by setting $u = n-t-\eta n$, we obtain
	\begin{align*}
		\sum_{u < n-t-\eta n} S_{u} \abs{\fperp(u)}^{2} & \leq n \; \Th{q^{(n-t-\eta n)(m-t+\eta n)} \; q^{(t-m)(n-t)}} \\
		& = n \; \Th{q^{(n-t)(t-m + m - t + \eta n)} \; q^{-\eta n(m-t+\eta n)}} \\
		& = n \; \Th{q^{-\eta n (m - t + \eta n - n + t)}} \\
		& = n \; \Th{q^{-\eta n (m -n  + \eta n )}} \\
		& = q^{-\Om{n}}
	\end{align*}
	as $n \leq m$ by assumption. It concludes the proof. 
	
\end{proof}

	\bibliographystyle{alpha}

\begin{thebibliography}{MRRW77}

\bibitem[AAB{\etalchar{+}}19]{AABBBDGZCH19}
Carlos {Aguilar Melchor}, Nicolas Aragon, Slim Bettaieb, Loïc Bidoux, Olivier
  Blazy, Jean-Christophe Deneuville, Philippe Gaborit, Gilles Z{\'e}mor, Alain
  Couvreur, and Adrien Hauteville.
\newblock Rank quasi cyclic {(RQC)}.
\newblock Second round submission to the NIST post-quantum cryptography call,
  April 2019.

\bibitem[ABD{\etalchar{+}}19]{ABDGHRTZABBBO19}
Nicolas Aragon, Olivier Blazy, Jean-Christophe Deneuville, Philippe Gaborit,
  Adrien Hauteville, Olivier Ruatta, Jean-Pierre Tillich, Gilles Z{\'e}mor,
  Carlos {Aguilar Melchor}, Slim Bettaieb, Lo{\"i}c Bidoux, Magali Bardet, and
  Ayoub Otmani.
\newblock {ROLLO} (merger of {Rank-Ouroboros, LAKE and LOCKER}).
\newblock Second round submission to the NIST post-quantum cryptography call,
  March 2019.

\bibitem[AFS05]{AFS05}
Daniel Augot, Matthieu Finiasz, and Nicolas Sendrier.
\newblock A family of fast syndrome based cryptographic hash functions.
\newblock In {\em Ed Dawson, Serge Vaudenay (editors). Progress
  cryptology-Mycrypt First international conference on cryptology Malaysia,
  ISBN 978-3-540-28938-8}, volume 3715 of {\em LNCS}, pages 64--83, Kuala
  Lumpur, Malaysia, September 2005. Springer.

\bibitem[AHI{\etalchar{+}}17]{AHIKV17}
Benny Applebaum, Naama Haramaty, Yuval Ishai, Eyal Kushilevitz, and Vinod
  Vaikuntanathan.
\newblock Low-complexity cryptographic hash functions.
\newblock In {\em {ITCS}}, volume~67 of {\em LIPIcs}, pages 7:1--7:31. Schloss
  Dagstuhl - Leibniz-Zentrum f{\"{u}}r Informatik, 2017.

\bibitem[Ale11]{A11}
Michael Alekhnovich.
\newblock More on average case vs approximation complexity.
\newblock {\em Computational Complexity}, 20(4):755--786, 2011.

\bibitem[BCG{\etalchar{+}}19]{BCGMM19}
Emanuele Bellini, Florian Caullery, Philippe Gaborit, Marc Manzano, and
  V{\'{\i}}ctor Mateu.
\newblock Improved {Veron} identification and signature schemes in the rank
  metric.
\newblock In {\em Proc. IEEE Int. Symposium Inf. Theory - ISIT~2019}, volume
  abs/1903.10212, pages 1872--1876, Paris, France, July 2019. {IEEE}.

\bibitem[BCN89]{BCN89}
Andries~E. Brouwer, Arjeh~M. Cohen, and Arnold Neumaier.
\newblock {\em Distance-Regular Graphs}.
\newblock Number~18 in Ergebnisse der Mathematik und ihrer Grenzgebiete. 3.
  Folge / A Series of Modern Surveys in Mathematics. Springer Verlag Berlin
  Heidelberg, 1989.

\bibitem[BGHM20]{BGHM20}
Emanuele Bellini, Philippe Gaborit, Alexandros Hasikos, and V{\'{\i}}ctor
  Mateu.
\newblock Enhancing code based zero-knowledge proofs using rank metric.
\newblock In Stephan Krenn, Haya Shulman, and Serge Vaudenay, editors, {\em
  Cryptology and Network Security - 19th International Conference, {CANS} 2020,
  Vienna, Austria, December 14-16, 2020, Proceedings}, volume 12579 of {\em
  Lecture Notes in Computer Science}, pages 570--592. Springer, 2020.

\bibitem[BJMM12]{BJMM12}
Anja Becker, Antoine Joux, Alexander May, and Alexander Meurer.
\newblock Decoding random binary linear codes in {$2^{n/20}$}: How {$1+1=0$}
  improves information set decoding.
\newblock In {\em Advances in Cryptology - EUROCRYPT~2012}, LNCS. Springer,
  2012.

\bibitem[BLVW19]{BLVW19}
Zvika Brakerski, Vadim Lyubashevsky, Vinod Vaikuntanathan, and Daniel Wichs.
\newblock Worst-case hardness for {LPN} and cryptographic hashing via code
  smoothing.
\newblock In Yuval Ishai and Vincent Rijmen, editors, {\em Advances in
  Cryptology - {EUROCRYPT} 2019 - 38th Annual International Conference on the
  Theory and Applications of Cryptographic Techniques, Darmstadt, Germany, May
  19-23, 2019, Proceedings, Part {III}}, volume 11478 of {\em LNCS}, pages
  619--635. Springer, 2019.

\bibitem[BM18]{BM18}
Leif Both and Alexander May.
\newblock Decoding linear codes with high error rate and its impact for {LPN}
  security.
\newblock In Tanja Lange and Rainer Steinwandt, editors, {\em Post-Quantum
  Cryptography 2018}, volume 10786 of {\em LNCS}, pages 25--46, Fort
  Lauderdale, FL, USA, April 2018. Springer.

\bibitem[CDMT22]{CDMT22}
Kevin Carrier, Thomas {Debris{-}Alazard}, Charles {Meyer{-}Hilfiger}, and
  Jean{-}Pierre Tillich.
\newblock Statistical decoding 2.0: Reducing decoding to {LPN}.
\newblock In {\em Advances in Cryptology - ASIACRYPT~2022}, LNCS. Springer,
  2022.

\bibitem[CE03]{CE03}
Henry Cohn and Noam Elkies.
\newblock New upper bounds on sphere packings {I}.
\newblock {\em Ann. of Math. (2)}, 157(2):689--714, 2003.

\bibitem[Che]{C20a}
Yilei Chen.
\newblock personal communication.

\bibitem[CLZ22]{CLZ22}
Yilei Chen, Qipeng Liu, and Mark Zhandry.
\newblock Quantum algorithms for variants of average-case lattice problems via
  filtering.
\newblock In Orr Dunkelman and Stefan Dziembowski, editors, {\em Advances in
  Cryptology - {EUROCRYPT} 2022 - 41st Annual International Conference on the
  Theory and Applications of Cryptographic Techniques, Trondheim, Norway, May
  30 - June 3, 2022, Proceedings, Part {III}}, volume 13277 of {\em LNCS},
  pages 372--401. Springer, 2022.

\bibitem[CV]{CV20}
Yilei Chen and June Vuong.
\newblock Quantum reduction from binary {SIS} to {LPN} and more via generalized
  lattice tail bounds.
\newblock preprint.

\bibitem[DDRT22]{DDRT22}
Thomas {Debris{-}Alazard}, L{\'{e}}o Ducas, Nicolas Resch, and Jean{-}Pierre
  Tillich.
\newblock Smoothing codes and lattices: Systematic study and new bounds.
\newblock {\em CoRR}, abs/2205.10552, 2022.

\bibitem[Del72]{D72}
Philippe Delsarte.
\newblock Bounds for unrestricted codes.
\newblock {\em Philips Res. Rep.}, 27:272--289, 1972.

\bibitem[DL98]{DL98}
Philippe Delsarte and Vladimir~Iossifovitch Levenshtein.
\newblock Association schemes and coding theory.
\newblock {\em IEEE Trans. Inform. Theory}, 44(6):2477--2504, 1998.

\bibitem[DR22]{DR22}
Thomas {Debris-Alazard} and Nicolas Resch.
\newblock Worst and average case hardness of decoding via smoothing bounds.
\newblock preprint, December 2022.
\newblock eprint.

\bibitem[Dum89]{D89}
Il'ya Dumer.
\newblock Two decoding algorithms for linear codes.
\newblock {\em Probl. Inf. Transm.}, 25(1):17--23, 1989.

\bibitem[For01]{F01}
G.~David~Jr. Forney.
\newblock Codes on graphs: Normal realizations.
\newblock {\em IEEE Trans. Inform. Theory}, 47(2):520--548, 2001.

\bibitem[FS96]{FS96}
Jean-Bernard Fischer and Jacques Stern.
\newblock An efficient pseudo-random generator provably as secure as syndrome
  decoding.
\newblock In Ueli Maurer, editor, {\em Advances in Cryptology - EUROCRYPT'96},
  volume 1070 of {\em LNCS}, pages 245--255. Springer, 1996.

\bibitem[HTW20]{HW20}
Anna-Lena Horlemann-Trautmann and Violetta Weger.
\newblock Information set decoding in the lee metric with applications to
  cryptography.
\newblock {\em Advances in Mathematics of Communications}, 0, 2020.
\newblock online version, to appear.

\bibitem[Lev79]{L79a}
Vladimir~Iossifovitch Levenshtein.
\newblock Bounds for packings in {$n-$}dimensional {Euclidean} space.
\newblock {\em Soviet Math. Dokl.}, 20:417--421, 1979.

\bibitem[Loi06]{L06}
Pierre Loidreau.
\newblock Properties of codes in rank metric, 2006.

\bibitem[McE78]{M78}
Robert~J. McEliece.
\newblock {\em A Public-Key System Based on Algebraic Coding Theory}, pages
  114--116.
\newblock Jet Propulsion Lab, 1978.
\newblock DSN Progress Report 44.

\bibitem[MMT11]{MMT11}
Alexander May, Alexander Meurer, and Enrico Thomae.
\newblock Decoding random linear codes in {$O(2^{0.054n})$}.
\newblock In Dong~Hoon Lee and Xiaoyun Wang, editors, {\em Advances in
  Cryptology - ASIACRYPT~2011}, volume 7073 of {\em LNCS}, pages 107--124.
  Springer, 2011.

\bibitem[MO15]{MO15}
Alexander May and Ilya Ozerov.
\newblock On computing nearest neighbors with applications to decoding of
  binary linear codes.
\newblock In E.~Oswald and M.~Fischlin, editors, {\em Advances in Cryptology -
  EUROCRYPT~2015}, volume 9056 of {\em LNCS}, pages 203--228. Springer, 2015.

\bibitem[MRRW77]{MRRW77}
R.~J. McEliece, E.~R. Rodemich, H.~Rumsey, and L.~R. Welch.
\newblock New upper bounds on the rate of a code via the {Delsarte-MacWilliams}
  inequalities.
\newblock {\em IEEE Trans. Inform. Theory}, 23(2):157--166, 1977.

\bibitem[MTSB12]{MTSB12}
Rafael Misoczki, Jean-Pierre Tillich, Nicolas Sendrier, and Paulo S. L.~M.
  Barreto.
\newblock {MDPC}-{McEliece}: New {McEliece} variants from moderate density
  parity-check codes, 2012.

\bibitem[NC16]{NC16}
Michael~A. Nielsen and Isaac~L. Chuang.
\newblock {\em Quantum Computation and Quantum Information (10th Anniversary
  edition)}.
\newblock Cambridge University Press, 2016.

\bibitem[Pra62]{P62}
Eugene Prange.
\newblock The use of information sets in decoding cyclic codes.
\newblock {\em {IRE} Transactions on Information Theory}, 8(5):5--9, 1962.

\bibitem[Reg05]{R05}
Oded Regev.
\newblock On lattices, learning with errors, random linear codes, and
  cryptography.
\newblock In {\em Proceedings of the 37th Annual {ACM} Symposium on Theory of
  Computing, Baltimore, MD, USA, May 22-24, 2005}, pages 84--93, 2005.

\bibitem[Reg09]{R09}
Oded Regev.
\newblock On lattices, learning with errors, random linear codes, and
  cryptography.
\newblock Extended version of \cite{R05}, dated May 2009, 2009.

\bibitem[Sho94]{S94a}
Peter~W. Shor.
\newblock Algorithms for quantum computation: Discrete logarithms and
  factoring.
\newblock In S.~Goldwasser, editor, {\em FOCS}, pages 124--134, 1994.

\bibitem[Sim94]{S94d}
Daniel~R. Simon.
\newblock On the power of quantum computation.
\newblock In {\em 35th Annual Symposium on Foundations of Computer Science,
  Santa Fe, New Mexico, USA, 20-22 November 1994}, pages 116--123. {IEEE}
  Computer Society, 1994.

\bibitem[SSTX09]{SSTX09}
Damien Stehl{\'{e}}, Ron Steinfeld, Keisuke Tanaka, and Keita Xagawa.
\newblock Efficient public key encryption based on ideal lattices.
\newblock In Mitsuru Matsui, editor, {\em Advances in Cryptology - {ASIACRYPT}
  2009, 15th International Conference on the Theory and Application of
  Cryptology and Information Security, Tokyo, Japan, December 6-10, 2009.
  Proceedings}, volume 5912 of {\em LNCS}, pages 617--635. Springer, 2009.

\bibitem[Ste88]{S88}
Jacques Stern.
\newblock A method for finding codewords of small weight.
\newblock In G.~D. Cohen and J.~Wolfmann, editors, {\em Coding Theory and
  Applications}, volume 388 of {\em LNCS}, pages 106--113. Springer, 1988.

\bibitem[Ste93]{S93}
Jacques Stern.
\newblock A new identification scheme based on syndrome decoding.
\newblock In D.R. Stinson, editor, {\em Advances in Cryptology - CRYPTO'93},
  volume 773 of {\em LNCS}, pages 13--21. Springer, 1993.

\bibitem[YZW{\etalchar{+}}19]{YZWGL19}
Yu~Yu, Jiang Zhang, Jian Weng, Chun Guo, and Xiangxue Li.
\newblock Collision resistant hashing from sub-exponential learning parity with
  noise.
\newblock In {\em {ASIACRYPT} {(2)}}, volume 11922 of {\em Lecture Notes in
  Computer Science}, pages 3--24. Springer, 2019.

\end{thebibliography}
	\addcontentsline{toc}{section}{Bibliography}
	\newcommand{\etalchar}[1]{$^{#1}$}

\end{document}